\pdfoutput=1
\RequirePackage{ifpdf}
\ifpdf 
\documentclass[pdftex]{sigma}
\else
\documentclass{sigma}
\fi

\numberwithin{equation}{section}

\newtheorem{Theorem}{Theorem}[section]
\newtheorem{Corollary}[Theorem]{Corollary}

\newtheorem{Proposition}[Theorem]{Proposition}
 { \theoremstyle{definition}

 }

\usepackage{epic}

\def\t#1{\widetilde{#1}}
\def\h#1{\widehat{#1}}
\def\hh#1{\widehat{\widehat{#1}}}
\def\th#1{\widehat{\widetilde{#1}}}
\def\b#1{\overline{#1}}

\def\uh#1{\underset{\text{$\widehat{\phantom{\cdot}}$}}{#1}}

\newcommand{\br}{\boldsymbol{r}}
\newcommand{\bc}{\boldsymbol{c}}

\newcommand{\bI}{\boldsymbol{I}}
\newcommand{\bM}{\boldsymbol{M}}
\newcommand{\bK}{\boldsymbol{K}}
\newcommand{\ba}{\boldsymbol{a}}
\newcommand{\bb}{\boldsymbol{b}}
\newcommand{\bd}{\boldsymbol{d}}
\newcommand{\bu}{\boldsymbol{u}}

\usepackage{xcolor}

\begin{document}
\allowdisplaybreaks

\newcommand{\arXivNumber}{2106.12835}

\renewcommand{\PaperNumber}{093}

\FirstPageHeading

\ShortArticleName{A Revisit to the ABS H2 Equation}

\ArticleName{A Revisit to the ABS H2 Equation}

\Author{Aye Aye CHO, Maebel MESFUN and Da-Jun ZHANG}

\AuthorNameForHeading{A.A.~Cho, M.~Mesfun and D.-J.~Zhang}

\Address{Department of Mathematics, Shanghai University, Shanghai 200444, P.R.~China}
\Email{\href{mailto:ayeaye1@shu.edu.cn}{ayeaye1@shu.edu.cn},
\href{mailto:maebel@126.com}{maebel@126.com},
\href{mailto:djzhang@staff.shu.edu.cn}{djzhang@staff.shu.edu.cn}}

\ArticleDates{Received June 25, 2021, in final form October 13, 2021; Published online October 18, 2021}

\Abstract{In this paper we revisit the Adler--Bobenko--Suris H2 equation. The H2 equation is linearly related to the $S^{(0,0)}$ and $S^{(1,0)}$ variables in the Cauchy matrix scheme. We elaborate the coupled quad-system of $S^{(0,0)}$ and $S^{(1,0)}$ in terms of their 3-dimensional consistency, Lax pair, bilinear form and continuum limits. It is shown that $S^{(1,0)}$ itself satisfies a 9-point lattice equation and in continuum limit $S^{(1,0)}$ is related to the eigenfunction in the Lax pair of the Korteweg--de Vries equation.}

\Keywords{H2 equation; consistent around cube; Cauchy matrix approach; continuum limit; KdV equation}

\Classification{35Q51; 35Q55; 37K60}

\section{Introduction}

In the past two decades the research of discrete integrable systems has undergone a true development
(e.g., see \cite{HJN-2016} and the references therein).
The property called multi-dimensional consistency (MDC) plays a central role
in understanding integrability of discrete systems \cite{ABS-2003,BS-2002,Nij-2002,NW-2001}.
The~MDC means a lattice equation can be consistently embedded into a higher dimensional system.
For quadrilateral equations, this property is geometrically described as the
consistency around a~cube (CAC).
Adler, Bobenko and Suris (ABS) classified all affine linear quadrilateral equations
that are consistent-around-cube (CAC), $D_4$ symmetric and possess the so-called tetrahedron property~\cite{ABS-2003}.
In their full list there are amazingly 9 equations: namely,
H1, H2, H3$_\delta$, A1$_\delta$, A2, Q1$_\delta$, Q2, Q3$_\delta$ and Q4.
Most of the equations in the ABS list are (or related to) equations known before the list was found.
One of the new equation is H2, which is also special because only for~H2 and H3$_\delta$
their continuous counterpart are not found in terms of Miwa's coordinates~\cite{Vermeeren-SIGMA-2019}.

H2 equation reads \cite{ABS-2003}
\begin{gather}\label{H2}
\big(y-\th y\big)\big(\t{y}-\h{y}\big)+\big(p^{2}-q^{2}\big)\big(y+\t{y}+\h{y}+\th{y}-p^{2}-q^{2}\big)=0,
\end{gather}
where $p$, $q$ are spacing parameters in $n$- and $m$-direction, respectively, and
tilde-hat notations stand for shifts in $n$- and $m$-direction (see equation~\eqref{shift} for definition).
In this paper, we will revisit H2 equation with the help of H1
(i.e., the lattice potential Korteweg--de Vries equation (lpKdV)~\cite{NQC-1983})
\begin{gather}\label{H1}
\big(z-\th z\big)\big(\h{z}-\t z\big)=p^{2}-q^{2}.
\end{gather}
Note that H1 has a background solution $z_0=pn+qm$. By removing the background from the equation,
i.e., introducing $z=z_0-u$ so that $u$ has a zero background, the resulting equation~\cite{NQC-1983}
\begin{gather}\label{lpKdV}
 \big(p+q+u-\th u \big)\big(p-q+\h u-\t u\big)=p^2-q^2
\end{gather}
is ready to yield the potential KdV equation in its continuum limit.
H2 has also a background solution $y_0=z_0^2$ \cite{HZ-2009} but it is hard to remove it for H2 equation.
This is why so far the continuous counterpart of H2 is unknown.

The main contents of the paper are the following.
We will start from the following relations obtained from
Cauchy matrix approach (see equation~\eqref{BT-1-eq}),
\begin{subequations}\label{BT-1}
\begin{gather}
w+\widetilde{w} = p\widetilde{u}-pu+u\widetilde{u},
\\
w+\widehat{w} = q\widehat{u}-qu+u\widehat{u},
\end{gather}
\end{subequations}
which provide a non-auto B\"acklund transformation (BT) between H1 and H2 with
\begin{subequations}\label{yzuw}
\begin{gather}
y=z_0^2-2z_0 u+ 2w, \label{yzuw-y}
\\
z=z_0-u.
\end{gather}
\end{subequations}
From \eqref{BT-1} we will derive a quadrilateral coupled system
(in Section~\ref{sec-3-1})
\begin{subequations}\label{uw-eq}
\begin{gather}
\big(p+q+u-\widehat{\widetilde{u}}\big)\big(\widetilde{w}-\widehat{w}\big)
-(p-q)\big(w-\widehat{\widetilde{w}}\big)=0, \label{uw-eq-a}
\\
\big(p-q+\widehat{u}-\widetilde{u}\big)\big(w-\widehat{\widetilde{w}}\big)
-(p+q)\big(\widetilde{w}-\widehat{w}\big)=0, \label{uw-eq-b}
\end{gather}
\end{subequations}
which is CAC and therefore is integrable.
As a consequence we will have a one-component 9-point equation of $w$,
\begin{gather}\label{w-eq}
(p+q)\left(\frac{\t{w}-\h{w}}{w-\th{w}}+\frac{\h{\h{\t{w}}}-
\h{\t{\t{w}}}}{\h{\t{w}}-\h{\h{\t{\t{w}}}}}\right)
+(p-q)\left(\frac{\h{w}-\h{\h{\t{w}}}}{\h{\h{w}}-\h{\t{w}}}+
\frac{\t{w}-\h{\t{\t{w}}}}{\t{\t{w}}-\h{\t{w}}}\right)=0.
\end{gather}
Considering the relation \eqref{yzuw}
we call \eqref{uw-eq} or \eqref{w-eq} an alternative of H2.
In this paper we will focus on the coupled system \eqref{uw-eq} to investigate its
integrability, bilinear form, continuum limits, etc.
It turns out that the continuous counterpart of \eqref{uw-eq} consists of
the potential KdV equation and the time-part in the Lax pair of the (potential) KdV equation.
Note that with some transformations the system \eqref{uw-eq}
will be equivalent to a $\mathrm{H}1\times \mathrm{H}2$ two-component system
which is first derived in \cite{KNPT-2020} (see equation~\eqref{H1H2}).

The paper is arranged as follows.
In Section~\ref{sec-2} we recall the Cauchy matrix approach and derive the BT \eqref{BT-1}.
Then we derive the coupled quad-system \eqref{uw-eq} and investigate its 3D consistency and Lax pairs in Section~\ref{sec-3},
and present its bilinear form and Casoratian solutions in Section~\ref{sec-4}.
After that, Section~\ref{sec-5} consists of the investigation of the continuous counterpart of the quad-system \eqref{uw-eq},
including their continuum limits and Cauchy matrix approach in continuous case,
which leads to a connection between $S^{(1,0)}$ equation and the KdV equation.
Finally, concluding remarks are given in Section~\ref{sec-6}.
There is an appendix section to explore higher order equations in the continuum limits.

\section{Cauchy matrix scheme}\label{sec-2}

\subsection{Preliminary}

For a function $U$ of $(n,m)\in \mathbb{Z}^2$, we introduce shorthand to express it and its shifts, such as
\begin{gather}\label{shift}
U\doteq U(n,m),\quad\
\t U\doteq U(n+1,m),\quad\
\h U\doteq U(n,m+1),\quad\
\th U\doteq U(n+1,m+1).
\end{gather}
Spacing parameters $p$ and $q$ serve as spacing parameters of $n$-direction and $m$-direction, respectively.
These settings are depicted as in Figure~\ref{Fig-1}($a$).
In 3-dimensional case, we introduce the third direction $l$ together with its spacing parameter $r$,
and denote the shift by $\b U \doteq U(n,m,l+1)$. We also introduce shift operators $E_n$, $E_m$ and $E_l$ such that
$E_n U= \t U$, $E^2_mU=\h{\h U}$, etc.

\begin{figure}[ht]
\setlength{\unitlength}{0.0004in}
\hspace{2cm}
\begin{picture}(3482,2813)(0,-10)
\put(1925,-510){\makebox(0,0)[lb]{($a$)}}
\put(1275,2708){\circle*{150}}
\put(825,2808){\makebox(0,0)[lb]{$\h U$}}
\put(3075,2708){\circle*{150}}
\put(3275,2808){\makebox(0,0)[lb]{$\th U$}}
\put(1275,908){\circle*{150}}
\put(945,1808){\makebox(0,0)[lb]{$q$}}
\put(825,1008){\makebox(0,0)[lb]{$U$}}
\put(3075,908){\circle*{150}}
\put(3300,1008){\makebox(0,0)[lb]{$ \t U$}}
\put(2025,1008){\makebox(0,0)[lb]{$p$}}
\drawline(275,2708)(4075,2708)
\drawline(3075,3633)(3075,0)
\drawline(275,908)(4075,908)
\drawline(1275,3633)(1275,0)
\end{picture}
\hspace{3cm}
\begin{picture}(3482,3700)(0,-500)
\put(1325,-1000){\makebox(0,0)[lb]{($b$)}}
\put(450,1883){\circle*{150}}
\put(-100,1883){\makebox(0,0)[lb]{$\widetilde{\overline{U}}$}}
\put(1275,2708){\circle*{150}}
\put(825,2708){\makebox(0,0)[lb]{$\overline{U}$}}
\put(3075,2708){\circle*{150}}
\put(3375,2633){\makebox(0,0)[lb]{$\h{\b{U}}$}}
\put(2250,83){\circle*{150}}
\put(2650,8){\makebox(0,0)[lb]{$\widehat{\widetilde{U}}$}}
\put(1275,908){\circle*{150}}
\put(1275,908){\circle*{150}}
\put(825,908){\makebox(0,0)[lb]{$U$}}
\put(1025,1508){\makebox(0,0)[lb]{$r$}}
\put(2000,633){\makebox(0,0)[lb]{$q$}}
\put(560,400){\makebox(0,0)[lb]{$p$}}
\put(2250,1883){\circle*{150}}
\put(1850,2008){\makebox(0,0)[lb]{$\widehat{\widetilde{\overline{U}}}$}}
\put(450,83){\circle*{150}}
\put(0,8){\makebox(0,0)[lb]{$ \widetilde{U}$}}
\put(3075,908){\circle*{150}}
\put(3300,833){\makebox(0,0)[lb]{$\widehat{U}$}}
\drawline(1275,2708)(3075,2708)
\drawline(1275,2708)(450,1883)
\drawline(450,1883)(450,83)
\drawline(3075,2708)(2250,1883)
\drawline(450,1883)(2250,1883)
\drawline(3075,2633)(3075,908)
\dashline{60.000}(1275,908)(450,83)
\dashline{60.000}(1275,908)(3075,908)
\drawline(2250,1883)(2250,83)
\drawline(450,83)(2250,83)
\drawline(3075,908)(2250,83)
\dashline{60.000}(1275,2633)(1275,908)
\end{picture}\vspace{2ex}
\caption{($a$): The points on which the equation is defined and ($b$): the
 consistency cube.}\label{Fig-1}
\end{figure}
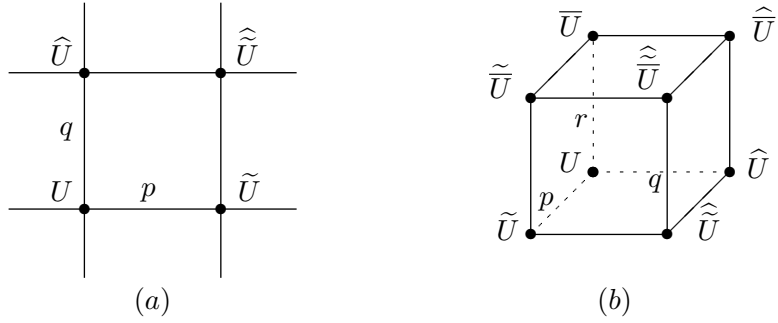

A quadrilateral equation
\[
Q\big(U,\t U, \h U, \th U; p,q\big)=0
\]
is CAC means the equation can be consistently embedded on six faces of the cube in Figure~\ref{Fig-1}($b$).
In other words, given initial values $U$, $\t U$, $\h U$, $\b U$,
the value of $\th{\b U}$ is unique although it will be determined by the top-, front- and right-side, three equations.

\subsection[Cauchy matrix approach to~(1.4)]
{Cauchy matrix approach to (\ref{BT-1})}

To derive the coupled system \eqref{BT-1}, we consider the Sylvester equation
\begin{gather}\label{Sylvester}
\bK \bM + \bM\bK = \br {\bc}^{\rm T},
\end{gather}
where $\bM, \bK \in \mathbb{C}_{N\times N}$, $\br$ and $\bc$ are column vectors in $\mathbb{C}_N$ defined as
\[
\br=(r_1, r_2,\dots, r_N^{})^{\rm T},\qquad \bc=(c_1, c_2,\dots, c_N^{})^{\rm T}.
\]
Let us sketch some known results related to this equation.

\begin{Proposition}
For given $\br$, $\bc$ and $\bK$, if $\bK$ and $-\bK$ do not share eigenvalues, then $\bM$ will be uniquely
determined by the equation \eqref{Sylvester} {\rm \cite{Syl-1884}}.
In addition, explicit formulae of $\bM$ have been constructed in {\rm \cite{ZZ-2013}}
with respect to canonical forms of $\bK$.
\end{Proposition}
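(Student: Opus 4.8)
The plan is to read \eqref{Sylvester} as a single linear system for the $N^2$ entries of $\bM$ and to show that its coefficient operator is invertible precisely under the stated spectral hypothesis. Concretely, I would introduce the linear map $\mathcal{L}\colon \mathbb{C}_{N\times N}\to\mathbb{C}_{N\times N}$ defined by $\mathcal{L}(\bM)=\bK\bM+\bM\bK$, and note that the given data $\br\bc^{\rm T}$ is merely one particular element of the target space. Uniqueness, together with existence, of the solution $\bM$ is then equivalent to $\mathcal{L}$ being invertible; and since $\mathcal{L}$ acts on a finite-dimensional space, invertibility, injectivity and surjectivity all coincide. Thus the whole first assertion reduces to deciding when $0$ fails to be an eigenvalue of $\mathcal{L}$.

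The key computational step is to identify the spectrum of $\mathcal{L}$. I would vectorise the equation by stacking the columns of $\bM$, using the standard identity $\mathrm{vec}(\bK\bM+\bM\bK)=\big(\bI\otimes\bK+\bK^{\rm T}\otimes\bI\big)\mathrm{vec}(\bM)$, so that \eqref{Sylvester} becomes an ordinary linear system whose coefficient matrix is the Kronecker sum $\bI\otimes\bK+\bK^{\rm T}\otimes\bI$. The eigenvalues of such a Kronecker sum are exactly the pairwise sums $\lambda_i+\lambda_j$, where $\lambda_1,\dots,\lambda_N$ are the eigenvalues of $\bK$ (hence also of $\bK^{\rm T}$). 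Therefore $\mathcal{L}$ is invertible if and only if $\lambda_i+\lambda_j\neq 0$ for all $i,j$, which says precisely that no eigenvalue of $\bK$ is the negative of an eigenvalue of $\bK$, i.e., that $\bK$ and $-\bK$ share no common eigenvalue. This is exactly the hypothesis, so under it $\bM$ is uniquely determined; this is the classical Sylvester theorem \cite{Syl-1884}.

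I expect the only delicate point to be the eigenvalue count for the Kronecker sum; if one prefers to avoid quoting it, the same conclusion follows by bringing $\bK$ to upper-triangular form through a similarity $\bK=PTP^{-1}$ and solving the transformed equation $T\bM'+\bM'T=C'$ entrywise. There the coefficient of the unknown $M'_{ij}$ is $T_{ii}+T_{jj}=\lambda_i+\lambda_j$, while the remaining contributions involve only entries in earlier rows or columns, so a back-substitution ordered on the indices terminates and is uniquely solvable exactly when all the pivots $\lambda_i+\lambda_j$ are nonzero. Finally, the second assertion of the proposition is not to be reproved here: the explicit closed forms of $\bM$ adapted to the canonical (Jordan) forms of $\bK$ are precisely those built in \cite{ZZ-2013}, to which I would simply refer.
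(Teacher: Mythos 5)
Your argument is correct. Note, however, that the paper does not actually prove this Proposition: it is stated as a recollection of known results, with the unique-solvability assertion attributed to \cite{Syl-1884} and the explicit formulae to \cite{ZZ-2013}. What you have done is supply the standard self-contained proof of the classical fact being cited. Both of your routes are sound: writing $\bK\bM+\bM\bK=\br\bc^{\rm T}$ as $\bK\bM-\bM(-\bK)=\br\bc^{\rm T}$ and vectorising gives the coefficient matrix $\bI\otimes\bK+\bK^{\rm T}\otimes\bI$, whose spectrum is $\{\lambda_i+\lambda_j\}$, so invertibility is exactly the condition that $\bK$ and $-\bK$ have disjoint spectra; and the triangularisation argument reaches the same conclusion without quoting the Kronecker-sum eigenvalue theorem. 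One small imprecision in the second route: for $T$ upper triangular the off-diagonal contributions to the $(i,j)$ entry involve $M'_{kj}$ with $k>i$ and $M'_{ik}$ with $k<j$, so the back-substitution should be ordered by decreasing row index and increasing column index (e.g., by increasing $j-i$), not literally by ``earlier rows or columns''; the substitution still terminates, so the conclusion stands. Your decision to leave the explicit formulae of $\bM$ to \cite{ZZ-2013} matches exactly what the paper does.
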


\begin{Proposition}[\cite{ZZ-2013}]\label{prop-2}
Introduce
\begin{gather}
S^{(i,j)} = \bc^{\rm T}_{} {\bK}^{j}(\bI+\bM)^{-1} \bK^i \br,\qquad i,j\in \mathbb{Z},
\label{S-ij}
\end{gather}
where the elements $\br$, $\bc$, $\bK$ and $\bM$ satisfy the Sylvester equation \eqref{Sylvester}
and $\bI$ is the $N$-th order unit matrix.
For given $\br$ and $\bc$, the matrix $\bK$ and any matrix similar to it
give rise to the same~$S^{(i,j)}$, and the symmetric property
\begin{gather}\label{S-sym}
S^{(i,j)}=S^{(j,i)}
\end{gather}
holds.
\end{Proposition}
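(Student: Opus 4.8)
The plan is to prove the two assertions separately, since they are logically independent: first the similarity-invariance of the quantities $S^{(i,j)}$, and then the symmetry property \eqref{S-sym}.

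\emph{Similarity-invariance.} First I would observe that the Sylvester equation \eqref{Sylvester} and the definition \eqref{S-ij} transform covariantly under a change of basis. Suppose $\bK' = T\bK T^{-1}$ for some invertible $T$. The idea is to show that if $(\bM,\br,\bc,\bK)$ solves \eqref{Sylvester}, then setting $\bM' = T\bM T^{-1}$, $\br' = T\br$, and $\bc'^{\rm T} = \bc^{\rm T}T^{-1}$ produces a solution $(\bM',\br',\bc',\bK')$ of the transformed Sylvester equation. Indeed, conjugating $\bK\bM+\bM\bK=\br\bc^{\rm T}$ by $T$ gives $\bK'\bM'+\bM'\bK'=\br'\bc'^{\rm T}$ directly. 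The subtlety here is that the proposition fixes $\br$ and $\bc$ and varies only $\bK$; so I would instead argue that the value of $S^{(i,j)}$ is expressible purely in terms of similarity-invariant data, or equivalently note that the map $\bK\mapsto\bM$ is uniquely determined (by Proposition~1), and then check that substituting $\bM'=T\bM T^{-1}$ into \eqref{S-ij} with $\bK'$, $\br'$, $\bc'$ reproduces $\bc^{\rm T}\bK^{j}(\bI+\bM)^{-1}\bK^i\br$, because the factors of $T$ and $T^{-1}$ telescope:
\begin{gather*}
\bc'^{\rm T}\bK'^{j}(\bI+\bM')^{-1}\bK'^{i}\br'
=\bc^{\rm T}T^{-1}(T\bK^{j}T^{-1})(T(\bI+\bM)^{-1}T^{-1})(T\bK^{i}T^{-1})T\br
=S^{(i,j)}.
\end{gather*}

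\emph{Symmetry.} For \eqref{S-sym} the key step is to exploit the antisymmetric structure of the Sylvester equation to relate $(\bI+\bM)^{-1}$ evaluated against $\bK^i$ on the right and $\bK^j$ on the left. I expect this to be the real obstacle, and the natural route is to establish a recursion or an exchange identity. The plan is to multiply the Sylvester equation on the left and right by $(\bI+\bM)^{-1}$ and by suitable powers of $\bK$, then sandwich between $\bc^{\rm T}$ and $\br$, so as to derive that the bilinear form $\bc^{\rm T}\bK^{j}(\bI+\bM)^{-1}\bK^i\br$ is unchanged when $i$ and $j$ are swapped. Concretely, I would look for a relation of the shape
\begin{gather*}
S^{(i+1,j)}-S^{(i,j+1)} = (\text{terms that telescope or cancel under the trace/scalar pairing}),
\end{gash*}
and then induct on $|i-j|$, with the base case $S^{(i,i)}=S^{(i,i)}$ trivial. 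Since $S^{(i,j)}$ is a scalar, it equals its own transpose, $S^{(i,j)}=\br^{\rm T}(\bI+\bM^{\rm T})^{-1}\bK^{j}\bc\cdot(\dots)$; combining transposition with the covariance of the Sylvester equation under $(\bM,\bK,\br,\bc)\mapsto(\bM^{\rm T},\bK^{\rm T},\bc,\br)$ should interchange the roles of $i$ and $j$ and close the argument. The main difficulty is arranging the algebra so that the off-diagonal terms generated by $\bK\bM+\bM\bK=\br\bc^{\rm T}$ pair to zero against $\bc^{\rm T}$ and $\br$ rather than accumulating; I anticipate that the rank-one right-hand side $\br\bc^{\rm T}$ is exactly what makes these corrections scalar multiples of lower $S^{(i',j')}$, allowing the induction to go through.
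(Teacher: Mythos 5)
The paper itself offers no proof of this proposition --- it is quoted from \cite{ZZ-2013} --- so I am judging your proposal on its own merits. Your treatment of the similarity-invariance is correct and is the standard argument: you rightly flag that the statement must be read covariantly, since conjugating $\bK$ while holding $\br$ and $\bc$ literally fixed does change $S^{(i,j)}$ in general (already for $N=2$, $\bK=\mathrm{diag}(k_1,k_2)$, $\br=(1,0)^{\rm T}$, $\bc=(1,1)^{\rm T}$ one finds $S^{(0,0)}=2k_1/(2k_1+1)$, which is not invariant under permuting the $k_i$ alone), and once the accompanying transformation $\br\mapsto T\br$, $\bc^{\rm T}\mapsto\bc^{\rm T}T^{-1}$, $\bM\mapsto T\bM T^{-1}$ is imposed, your telescoping computation settles it.

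The symmetry part, however, is a plan rather than a proof, and the decisive step is missing. Transposition does give $S^{(i,j)}=\br^{\rm T}\big(\bK^{\rm T}\big)^{i}\big(\bI+\bM^{\rm T}\big)^{-1}\big(\bK^{\rm T}\big)^{j}\bc$, and the quadruple $\big(\bK^{\rm T},\bM^{\rm T},\bc,\br\big)$ solves \eqref{Sylvester} with $\br$ and $\bc$ interchanged; but to identify this scalar with $S^{(j,i)}$ of the \emph{original} system you need an explicit invertible $T$ satisfying simultaneously $T\bK T^{-1}=\bK^{\rm T}$, $T\br=\bc$ and $\bc^{\rm T}T^{-1}=\br^{\rm T}$ (whence $T\bM T^{-1}=\bM^{\rm T}$ by uniqueness of the Sylvester solution). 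Such a $T$ does not come from generalities about similarity: it must be built from the canonical form of $\bK$ and the explicit Cauchy-type formula for $\bM$. For diagonal $\bK$ with distinct eigenvalues and $r_i\neq0$ one has $M_{ij}=r_ic_j/(k_i+k_j)$ and may take $T=\mathrm{diag}(c_1/r_1,\dots,c_N/r_N)$; the Jordan-block and degenerate cases require the explicit constructions of \cite{ZZ-2013} or a continuity argument, and none of this appears in your sketch. Your fallback route --- an exchange identity plus induction on $|i-j|$ --- does not close as stated: sandwiching $\bK(\bI+\bM)+(\bI+\bM)\bK=2\bK+\br\bc^{\rm T}$ between $\bc^{\rm T}\bK^{j}(\bI+\bM)^{-1}$ and $(\bI+\bM)^{-1}\bK^{i}\br$ yields
\begin{gather*}
S^{(i+1,j)}+S^{(i,j+1)}=2\,\bc^{\rm T}\bK^{j}(\bI+\bM)^{-1}\bK(\bI+\bM)^{-1}\bK^{i}\br+S^{(i,0)}S^{(0,j)},
\end{gather*}
and the middle term is neither expressible through the $S^{(i,j)}$ alone nor manifestly symmetric under $i\leftrightarrow j$, so the recursion never becomes a closed relation among the quantities you are inducting on. The transpose-plus-intertwiner route is the one to complete.
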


The Cauchy matrix approach is a method to construct and study integrable equations
by means of the Sylvester-type equations.
In this approach integrable equations are presented as closed forms of $S^{(i,j)}$ and its shifts (or derivatives).
It is first systematically used in \cite{NAH-2009} to investigate ABS equations and later
developed in \cite{XZZ-2014,ZZ-2013} to more general cases.

To derive coupled system \eqref{BT-1}, we impose dispersion relation on $\br$ as
\begin{gather*}
 (p\bI-\bK) \widetilde{\br}=(p\bI+\bK)\br,\qquad (q\bI-\bK) \h{\br}=(q\bI+\bK)\br,
\end{gather*}
and let $\bc$ be a constant column vector, where $\bK$ is a constant $N\times N$ matrix.
Next, making use of the results obtained in \cite{ZZ-2013}, we have the following shift relations for $\bM$,
\begin{gather*}
\t{\bM}(p\bI+\bK)-(p\bI+\bK)\bM =\t \br \bc^{\rm T},\\
(p\bI-\bK)\t \bM -\bM (p\bI-\bK) =\br \bc^{\rm T},\\
\h{\bM}(q\bI+\bK)-(q\bI+\bK)\bM =\h \br \bc^{\rm T},\\
(q\bI-\bK)\h \bM -\bM (q\bI-\bK) =\br \bc^{\rm T},
\end{gather*}
from which one arrives at (see equations~(27) in \cite{ZZ-2013})
\begin{gather*}
p\widetilde{S}^{(i,j)}-\widetilde{S}^{(i,j+1)} = pS^{(i,j)}+S^{(i+1,j)}-\widetilde{S}^{(i,0)}S^{(0,j)}, 
\\
q\widehat{S}^{(i,j)}-\widehat{S}^{(i,j+1)} = qS^{(i,j)}+S^{(i+1,j)}-\widehat{S}^{(i,0)}S^{(0,j)}. \label{S-c}
\end{gather*}
Now, taking $(i,j)=(0,0)$ in the above recurrence relation, defining
\begin{gather}\label{uw-S}
u=S^{(0,0)},\qquad w=S^{(1,0)},
\end{gather}
and making use of the symmetric property \eqref{S-sym},
we have
\begin{subequations}\label{BT-1-eq}
\begin{gather}
w+\widetilde{w} = p\widetilde{u}-pu+u\widetilde{u},\label{BT-1a}\\
w+\widehat{w} = q\widehat{u}-qu+u\widehat{u}, \label{BT-1b}
\end{gather}
\end{subequations}
i.e., the coupled system \eqref{BT-1}.

\subsection{Consistent triplet}

Suppose that we have a coupled system
\begin{gather}\label{BT-f}
f\big(u,v,\t u, \t v,p\big)=0,\qquad
f\big(u,v,\h u, \h v,q\big)=0.
\end{gather}
The compatibility with respect to $v$ \big(i.e., $\th v=\t{\h v}$\big) gives rise to a quadrilateral equation of $u$,
\begin{gather}
Q\big(u,\t u, \h u, \th u; p,q\big)=0,\label{u}
\end{gather}
while the compatibility for $u$ yields a quadrilateral equation of $v$,
\begin{gather}
G\big(v,\t v, \h v, \th v; p,q\big)=0.\label{U}
\end{gather}
When this happens we say that equations~\eqref{BT-f}, \eqref{u} and \eqref{U} compose a consistent triplet.
Note that such a triplet can be extended to the lattice systems defined on larger stencils,
not necessary to be restricted to the quadrilateral case.
In such a triplet, equation~\eqref{BT-f} acts as a~BT to connect the other two equations \eqref{u} and \eqref{U},
and due to the compatibility, any pair of solutions $(u,v)$ of \eqref{BT-f} will provide solutions
to \eqref{u} and \eqref{U}.
Consistent triplets have played a useful role in constructing rational solutions for a number of
quadrilateral equations \cite{ZhaZ-SIGMA-2017}.

Using the transformation \eqref{yzuw}, the coupled system \eqref{BT-1-eq} is written as
\begin{subequations}\label{BT-2}
\begin{align}
& y+\widetilde{y}-p^2 = 2 z \widetilde{z}, \label{BT-2a}\\
& y+\widehat{y}-q^2 = 2z\widehat{z}. \label{BT-2b}
\end{align}
\end{subequations}
Note that \eqref{yzuw} has been implied from equation~(5.26) in \cite{NAH-2009}.
By checking compatibility of $z$ and $y$, one immediately find that $y$ and $z$ have to satisfy
H2 \eqref{H2} and H1 \eqref{H1}, respectively.
Thus we have the following.
\begin{Proposition}\label{prop-333}
{\rm H2} equation \eqref{H2}, {\rm H1} \eqref{H1} and the system \eqref{BT-2} compose a consistent triplet.
\end{Proposition}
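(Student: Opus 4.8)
The plan is to read \eqref{BT-2} as the coupled system \eqref{BT-f} with $u=y$ and $v=z$, and to produce the two companion equations by eliminating one field at a time. First I would attach to the elementary square the four relations: the base equations \eqref{BT-2a} and \eqref{BT-2b} together with the hat-shift of \eqref{BT-2a} and the tilde-shift of \eqref{BT-2b},
\begin{gather*}
\h y+\th y-p^2=2\h z\,\th z,\qquad \t y+\th y-q^2=2\t z\,\th z.
\end{gather*}
Writing a single symbol $\th y$ (resp.\ $\th z$) in these already encodes the compatibility $\widehat{\t y}=\widetilde{\h y}$ (resp.\ $\widehat{\t z}=\widetilde{\h z}$), so imposing single-valuedness of one field is built into the labelling.

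To get H1 I would eliminate $y$ by the compatibility of $u=y$. Subtracting \eqref{BT-2b} from \eqref{BT-2a}, and the tilde-shifted from the hat-shifted relation, gives
\begin{gather*}
\t y-\h y-p^2+q^2=2z(\t z-\h z),\qquad \h y-\t y-p^2+q^2=2\th z(\h z-\t z);
\end{gather*}
here the common $\th y$ cancels precisely because $y$ is single-valued. Adding the two relations removes every remaining $y$-term and leaves $(z-\th z)(\h z-\t z)=p^2-q^2$, which is H1 \eqref{H1}.

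To get H2 I would eliminate $z$ by the compatibility of $v=z$, and this is the crux. Because each right-hand side in \eqref{BT-2} and in its shifts is a single monomial in the $z$-values, the product of the two $p$-relations and the product of the two $q$-relations both equal the same fourfold product $4z\,\t z\,\h z\,\th z$ once $z$ is single-valued. Equating their left-hand sides yields
\begin{gather*}
\big(y+\t y-p^2\big)\big(\h y+\th y-p^2\big)=\big(y+\h y-q^2\big)\big(\t y+\th y-q^2\big).
\end{gather*}
Expanding and using the identity $(y+\t y)(\h y+\th y)-(y+\h y)(\t y+\th y)=(y-\th y)(\h y-\t y)$ collapses the quartic part, after which the surviving terms factor through $p^2-q^2$ and reassemble into H2 \eqref{H2}. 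I expect this multiplicative elimination to be the delicate step: it works only because the coupling in \eqref{BT-2} is through products $2z\t z$ rather than sums, and one must recognise the resulting polynomial as the ABS H2 form rather than a merely equivalent expression. Assembling the two directions then shows that \eqref{BT-2} connects H1 and H2 as a shared B\"acklund transformation, so the three equations form a consistent triplet in the sense of \eqref{BT-f}--\eqref{U}.
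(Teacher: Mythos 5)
Your proof is correct and follows essentially the same route as the paper, which merely asserts that ``checking compatibility of $z$ and $y$'' in \eqref{BT-2} forces $z$ to satisfy H1 and $y$ to satisfy H2. Your additive elimination of $y$ and your multiplicative elimination of $z$ (equating the two expressions for $\th z$ coming from the hat-shift of \eqref{BT-2a} and the tilde-shift of \eqref{BT-2b}) are precisely those two compatibility checks, carried out explicitly and verified correctly.
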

This means that solving \eqref{BT-2} yields solutions to H1 and H2.
Based on this fact in Section~\ref{sec-4} we will bilinearize \eqref{BT-2}
and get a bilinear form for H2 which is simpler than the one given in~\cite{HZ-2009}.
Note that the fact of \eqref{BT-2} being a BT for H1 and H2 was also found in \cite{Atk-JPA-2008}
using the CAC property of H2 and the degeneration relation from H2 to H1.

\section[Coupled system~(1.6)]
{Coupled system (\ref{uw-eq})}\label{sec-3}

The coupled system \eqref{BT-1-eq} is a set of 3-point equations.
Initial values can be given on any line $n+m=l$, where $l$ can be arbitrary integers.
However, the system can only evolve towards down-left from the initial-value line.
Initial values can also be given on a vertical line or on a~horizontal line,
but neither of them can yield a bidirectional evolution.
Compared with \eqref{BT-1-eq}, the quad-system \eqref{uw-eq} is not only CAC,
but also allows bidirectional evolution starting from initial values
given on some staircase.
Next, we derive \eqref{uw-eq} from \eqref{BT-1-eq} and investigate its integrability.

\subsection{Derivation}\label{sec-3-1}

First, in light of the compatibility $E_n E_m w= E_m E_n w$,
from \eqref{BT-1-eq} we immediately get H1 (lpKdV) equation \eqref{lpKdV}.
Next, eliminating $w$ from \eqref{BT-1-eq} by $\eqref{BT-1b}$--$\eqref{BT-1a}$ and by
$\h{\eqref{BT-1a}}$--$\t{\eqref{BT-1b}}$, respectively, we get
\begin{gather}\label{w-th-1}
 \h{w}- \t{w}+p\widetilde{u}-q\widehat{u}=\big(p-q+\widehat{u}-\widetilde{u}\big)u
\end{gather}
and
\begin{gather}\label{w-th-2}
 \h{w}-\t{w}+p\widehat{u}-q\widetilde{u}=\big(p-q+\widehat{u}-\widetilde{u}\big)\widehat{\widetilde{u}}.
\end{gather}
Meanwhile, subtracting (\ref{BT-1b}) from $\h{\eqref{BT-1a}}$ and subtracting (\ref{BT-1a})
from $\t{\eqref{BT-1b}}$, respectively, yield
\begin{gather}\label{w-wth-1}
 w-\widehat{\widetilde{w}} + p\widehat{\widetilde{u}}+qu=\big(p+q+u-\widehat{\widetilde{u}}\big)\widehat{u}
\end{gather}
and
\begin{gather}\label{w-wth-2}
 w-\widehat{\widetilde{w}}+q\widehat{\widetilde{u}}+pu=\big(p+q+u-\widehat{\widetilde{u}}\big)\widetilde{u}.
\end{gather}
Then, by multiplying $\th u$ to \eqref{w-th-1} and $u$ to \eqref{w-th-2} and eliminating their
right-hand sides, we arrive~at
\begin{gather*}
 \big(\h{w}-\t{w}\big)\big(u-\widehat{\widetilde{u}}\big)
 =\big(p\t{u}-q\h{u}\big)\widehat{\widetilde{u}}+\big(q\widetilde{u}-p\widehat{u}\big)u,
\end{gather*}
and similarly, from \eqref{w-wth-1} and \eqref{w-wth-2} we have
\begin{gather*}
 \big(\h{u}-\t{u}\big)\big(w-\th{w}\big)=\big(p\widetilde{u}-q\widehat{u}\big)\widehat{\widetilde{u}}
 +\big(q\widetilde{u}-p\widehat{u}\big)u.
\end{gather*}
It then directly follows that
\begin{gather}\label{wu}
\big(w-\th{w}\big)\big(\t{u}-\h{u}\big)=\big(\widetilde{w}-\widehat{w}\big)\big(u-\widehat{\widetilde{u}}\big).
\end{gather}
Finally, combining the above equation and the lpKdV equation \eqref{lpKdV},
eliminating the term $\t{u}-\h{u}$ yields \eqref{uw-eq-a} and
eliminating the term ${u}-\th{u}$ gives rise to \eqref{uw-eq-b}.

Now we have obtained the coupled system \eqref{uw-eq}, together with H1 \eqref{lpKdV} and a relation \eqref{wu}.
The above derivation indicates the following more precise connection of them.

\begin{Proposition}\label{prop-3}
Among the following four equations,
\begin{subequations}\label{uw-sys}
\begin{gather}
\big(p+q+u-\widehat{\widetilde{u}}\big)\big(\widetilde{w}-\widehat{w}\big)
-(p-q)\big(w-\widehat{\widetilde{w}}\big)=0, \label{uw-eq-aa}
\\
\big(p-q+\widehat{u}-\widetilde{u}\big)\big(w-\th{w}\big)-(p+q)\big(\widetilde{w}-\widehat{w}\big)=0, \label{uw-eq-bb}
\\
\big(w-\th{w}\big)\big(\t{u}-\h{u}\big)=\big(\widetilde{w}-\widehat{w}\big)\big(u-\widehat{\widetilde{u}}\big), \label{uw-eq-ccc}
\\
\big(p+q+u-\th u \big)\big(p-q+\h u-\t u\big)=p^2-q^2,\label{uw-eq-dd}
\end{gather}
\end{subequations}
any two of them can be derived from the other two.
\end{Proposition}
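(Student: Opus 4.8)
The statement of Proposition~\ref{prop-3} asserts a combinatorial closure property among the four equations \eqref{uw-eq-aa}--\eqref{uw-eq-dd}: any two determine the remaining two. With four equations there are $\binom{4}{2}=6$ pairs to examine, so in principle one must verify six implications of the form ``two given $\Rightarrow$ other two.'' The plan is to exploit the algebraic structure already exposed in the derivation in Section~\ref{sec-3-1}, where \eqref{uw-eq-aa} and \eqref{uw-eq-bb} were produced by combining the relation \eqref{uw-eq-ccc} (displayed earlier as \eqref{wu}) with the lpKdV equation \eqref{uw-eq-dd}. That derivation already shows that \{\eqref{uw-eq-ccc},\eqref{uw-eq-dd}\} implies \{\eqref{uw-eq-aa},\eqref{uw-eq-bb}\}, handling one of the six pairs immediately.

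First I would set up convenient abbreviations for the three recurring quantities, writing $A=\widetilde{w}-\widehat{w}$, $B=w-\widehat{\widetilde{w}}$, $C=w-\th w$, together with $P=p+q+u-\widehat{\widetilde u}$ and $R=p-q+\widehat u-\widetilde u$, and noting $\t u-\h u$ and $u-\th u$ as the two ``slope'' differences appearing in \eqref{uw-eq-ccc} and \eqref{uw-eq-dd}. In this notation \eqref{uw-eq-aa} reads $PA-(p-q)B=0$, \eqref{uw-eq-bb} reads $RC-(p+q)A=0$, \eqref{uw-eq-ccc} reads $C(\t u-\h u)=A(u-\widehat{\widetilde u})$, and \eqref{uw-eq-dd} reads $PR=p^2-q^2=(p+q)(p-q)$. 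The key observation is that each of the four equations is a single bilinear relation among $A,B,C$ and the $u$-data, so that each implication amounts to eliminating one quantity between two such relations to recover a third. The main task is to see that these four relations are not independent but satisfy one syzygy, which is exactly what ``any two determine the other two'' encodes.

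The central computation is to exhibit that linear/bilinear identity. I would show that there is essentially one relation tying the four together, of the schematic form: the product $\eqref{uw-eq-aa}\cdot(\text{something})$ minus $\eqref{uw-eq-bb}\cdot(\text{something})$ reproduces a multiple of $\eqref{uw-eq-ccc}$ modulo $\eqref{uw-eq-dd}$. Concretely, from $PA=(p-q)B$ and $RC=(p+q)A$ one multiplies to get $PR\,AC=(p^2-q^2)AB$; invoking $PR=p^2-q^2$ (this is \eqref{uw-eq-dd}) and cancelling $(p^2-q^2)$ gives $AC=AB$, whence (when $A\neq0$) $B=C$, i.e.\ $\widehat{\widetilde w}=\th w$, which is precisely the compatibility underpinning \eqref{uw-eq-ccc}. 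Running these manipulations in the several possible directions is the engine that produces all six implications; for instance, given \eqref{uw-eq-aa} and \eqref{uw-eq-bb}, the eliminant $PR\,AC=(p^2-q^2)AB$ yields \eqref{uw-eq-dd} (after matching with the already-derived relation $B=C$ from \eqref{uw-eq-ccc}) or \eqref{uw-eq-ccc} directly, and symmetrically for the other starting pairs.

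The hard part will not be any single algebraic elimination---each is a short bilinear cancellation---but rather bookkeeping the genericity assumptions so that every implication is valid. The cancellations above divide by factors such as $A=\widetilde w-\widehat w$, $P$, $R$, or $p^2-q^2$, and on the degenerate locus where one of these vanishes the recovery of the ``other two'' equations can fail or become vacuous. I would therefore organize the proof as the single master identity $PR\,AC=(p^2-q^2)AB$ (valid identically once any two of the equations are imposed), and then treat the six pairs as corollaries, in each case pointing out which factor is being cancelled and remarking that the stated equivalence holds generically (on the open dense set where the relevant factor is nonzero), since the equations are affine-linear/bilinear and the generic conclusion extends to the closure. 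Rather than grinding through all six symmetric cases in full, I would present the two representative directions---\{\eqref{uw-eq-ccc},\eqref{uw-eq-dd}\}$\Rightarrow$\{\eqref{uw-eq-aa},\eqref{uw-eq-bb}\} (already in the text) and \{\eqref{uw-eq-aa},\eqref{uw-eq-bb}\}$\Rightarrow$\{\eqref{uw-eq-ccc},\eqref{uw-eq-dd}\}---and note that the remaining mixed pairs follow by the same eliminations applied to the master identity.
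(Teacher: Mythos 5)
There is a genuine gap at the heart of your ``master identity.'' In the paper's notation $\th w$ and $\widehat{\widetilde w}$ denote the \emph{same} quantity $w(n+1,m+1)$ (see the definition \eqref{shift}; the mixed typography in the proposition is purely cosmetic), so your $B=w-\widehat{\widetilde w}$ and $C=w-\th w$ are identical by definition. Consequently the conclusion ``$AC=AB$, whence $B=C$'' is vacuous, and it is certainly not equation \eqref{uw-eq-ccc}: that equation is a nontrivial relation coupling the $w$-differences to the $u$-differences $\t u-\h u$ and $u-\th u$, and no amount of manipulating $PA=(p-q)B$, $RB=(p+q)A$ and $PR=p^2-q^2$ alone can produce it, because those three relations never involve $\t u-\h u$ or $u-\th u$ separately --- only the combinations $P$ and $R$. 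What your identity $PR\,AB=(p^2-q^2)AB$ actually yields is the single implication $\{\eqref{uw-eq-aa},\eqref{uw-eq-bb}\}\Rightarrow\eqref{uw-eq-dd}$ (generically, when $AB\neq 0$), which is correct but is only half of one of the six cases.

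To recover \eqref{uw-eq-ccc} one must use the factorized form of \eqref{uw-eq-dd}, namely $p+q+u-\th u=-(p-q)\big(u-\th u\big)/\big(\h u-\t u\big)$ (equation \eqref{x1} of the paper), and substitute this for $P$ in \eqref{uw-eq-aa}; only then does the $u$-difference ratio enter and cancel against $(p-q)$ to leave $\big(w-\th w\big)\big(\t u-\h u\big)=\big(\t w-\h w\big)\big(u-\th u\big)$. The paper's proof is exactly this: each of the six cases is an elimination in which one of the factors $P$, $R$, $u-\th u$, $\h u-\t u$, $\t w-\h w$, $w-\th w$ is solved for from one equation and substituted into another, with \eqref{x1} and \eqref{x2} serving as the two reusable rewritings of \eqref{uw-eq-dd}. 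Your genericity remarks (nonvanishing of the cancelled factors) are a fair point that the paper also glosses over, but they do not rescue the main step: as written, your argument never derives \eqref{uw-eq-ccc} from any pair of the remaining equations, so the claimed closure property is not established.
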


\begin{proof}
Let us sketch the proof.
First, eliminating $w$ from \eqref{uw-eq-aa} and \eqref{uw-eq-bb} yields the lpKdV equation
\eqref{uw-eq-dd}. Rewriting \eqref{uw-eq-dd} as
\begin{gather}\label{x1}
 p+q+u-\th u =-(p-q)\frac{u-\th u}{\h u-\t u}
\end{gather}
and then substituting its left-hand-side into \eqref{uw-eq-aa} give rise to \eqref{uw-eq-ccc}.
Thus, \eqref{uw-eq-ccc} and \eqref{uw-eq-dd} are derived if \eqref{uw-eq-aa} and \eqref{uw-eq-bb} hold.

Next, assume \eqref{uw-eq-aa} and \eqref{uw-eq-ccc} are known.
Eliminating $\t w-\h w$ from them yields \eqref{x1}, which is \eqref{uw-eq-dd}.
Note that \eqref{uw-eq-dd} can also be written as
\begin{gather}\label{x2}
 u-\th u = \frac{- (p+q)\big(\h u-\t u\big) }{p-q+\h u-\t u}.
\end{gather}
Substituting its left-hand-side into \eqref{uw-eq-ccc} we get \eqref{uw-eq-bb}.

If we assume \eqref{uw-eq-aa} and \eqref{uw-eq-dd} are known, we may eliminate
$p+q+u-\th u$ from them and get equation \eqref{uw-eq-bb}. Then, as we described before,
\eqref{uw-eq-ccc} can be obtained when \eqref{uw-eq-aa} and \eqref{uw-eq-bb} hold.

If we start from \eqref{uw-eq-ccc} and \eqref{uw-eq-dd},
we can first rewrite \eqref{uw-eq-dd} as
\begin{gather*}
 \h u-\t u =\frac{-(p-q)\big(u-\th u\big)}{p+q+u-\th u},
\end{gather*}
and then eliminate $\h u-\t u$ using \eqref{uw-eq-ccc}. This gives rise to \eqref{uw-eq-aa},
while \eqref{uw-eq-bb} is a consequence of eliminating $u-\th u$ from \eqref{x2} and \eqref{uw-eq-ccc}.

Other cases can be proved similarly.
\end{proof}

In addition, eliminating $u$ from \eqref{uw-eq} yields the nine-point single-component equation \eqref{w-eq}.
Thus we have the following result parallel to Proposition \ref{prop-333}.
\begin{Proposition}\label{prop-33}
{\rm H1} equation \eqref{lpKdV}, the nine-point equation \eqref{w-eq} and the system \eqref{BT-1}
compose a consistent triplet.
\end{Proposition}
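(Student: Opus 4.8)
The plan is to verify the two defining compatibility conditions of a consistent triplet, following the pattern already used to derive \eqref{uw-eq} and to establish Proposition~\ref{prop-333}. I regard \eqref{BT-1} as the coupled pair $f\big(u,w,\t u,\t w,p\big)=0$, $f\big(u,w,\h u,\h w,q\big)=0$ linking the single fields $u$ and $w$, and I must show that imposing compatibility on $w$ returns the $u$-equation \eqref{lpKdV}, while imposing compatibility on $u$ returns the $w$-equation \eqref{w-eq}.

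For the first condition I would invoke the computation already carried out at the start of Section~\ref{sec-3-1}: shifting \eqref{BT-1a} by $E_m$ and \eqref{BT-1b} by $E_n$ gives two expressions for $\th w$; equating them and substituting $\t w-\h w$ from \eqref{BT-1} itself collapses everything to $(p+q)\big(\t u-\h u\big)+(q-p)\big(u-\th u\big)+\big(u-\th u\big)\big(\t u-\h u\big)=0$, which is exactly the lpKdV equation \eqref{lpKdV}. Thus $E_nE_mw=E_mE_nw$ forces $u$ onto H1.

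For the second condition I would eliminate $u$ to reach \eqref{w-eq}. Since \eqref{BT-1} yields the coupled system \eqref{uw-eq} (Section~\ref{sec-3-1}), I solve \eqref{uw-eq-a} and \eqref{uw-eq-b} for the two diagonal differences of $u$,
\[
u-\th u=(p-q)\frac{w-\th w}{\t w-\h w}-(p+q),\qquad
\h u-\t u=(p+q)\frac{\t w-\h w}{w-\th w}-(p-q).
\]
Together with their $E_n$- and $E_m$-shifts, these express every anti-diagonal difference of $u$ (from the first relation) and every main-diagonal difference (from the second) purely in terms of $w$. The key step is then to sum four such diagonal differences around the closed loop $\t w\to\h w\to\hh{\t w}\to\h{\t{\t w}}\to\t w$ on the lattice: because $u$ is single-valued the $u$-contributions telescope to zero, the additive constants $\pm(p\pm q)$ cancel in pairs, and the surviving rational $w$-terms assemble precisely into \eqref{w-eq}. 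I expect this to be the main obstacle, since one must choose the loop so that each edge is exactly one of the two available diagonal differences and then check that the four rational $w$-expressions land with the correct signs and with the correct coefficients $(p+q)$ and $(p-q)$.

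Having both compatibility conditions in hand — $w$-compatibility yielding \eqref{lpKdV} and $u$-compatibility yielding \eqref{w-eq} — I conclude that \eqref{lpKdV}, \eqref{w-eq} and \eqref{BT-1} form a consistent triplet, in complete parallel with Proposition~\ref{prop-333}: solving the BT \eqref{BT-1} simultaneously produces solutions of H1 and of the nine-point equation.
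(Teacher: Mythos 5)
Your proposal is correct and follows essentially the same route as the paper: the paper likewise obtains \eqref{lpKdV} from the $w$-compatibility $E_nE_mw=E_mE_nw$ of \eqref{BT-1} and obtains \eqref{w-eq} by eliminating $u$ from \eqref{uw-eq}, which is exactly the step you make explicit via the telescoping diamond loop (and your four rational terms do assemble into \eqref{w-eq} with the constants cancelling in pairs). The only slip is a harmless label swap: your first relation expresses the main-diagonal difference $u-\th u$ and the second the anti-diagonal difference $\h u-\t u$, not the other way round; the formulas themselves are right.
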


Besides, we also note that starting from the coupled system \eqref{uw-eq}
and making use of \eqref{yzuw} and \eqref{BT-2},
one can get another coupled system in terms of $z$ and $y$,
\begin{subequations}\label{H1H2}
\begin{gather}
\big(\t y-\h y\big)\big(\th z-z\big)+\big(p^2-q^2\big)\big(\th z+ z\big)=0,\label{H1H2-a}\\
\big(y-\th y\big)\big(\t z-\h z\big)+\big(p^2-q^2\big)\big(\t z+\h z\big)=0.\label{H1H2-b}
\end{gather}
\end{subequations}
In fact, from \eqref{yzuw} we have
\[ u=z_0-z,\qquad w=\frac{1}{2}\big(y+z_0^2-2z_0z\big).\]
Substituting these into \eqref{uw-eq-a}, and then successfully
using \eqref{BT-2a} to eliminate $y$,
using $\t{\eqref{BT-2b}}$ to eliminate $\th y$
and using \eqref{H1} to eliminate $\h z$,
finally we arrive at \eqref{H1H2-a}.
Equation \eqref{H1H2-b} can be derived in a similar way. This coupled system was first
derived in \cite{KNPT-2020}, known as the $\mathrm{H}1\times \mathrm{H}2$
2-component vertex equation.
Using \eqref{yzuw} and \eqref{BT-2}, the coupled system \eqref{uw-eq} can be recovered
from \eqref{H1H2} as well.

\subsection{Integrability: CAC and Lax pair}

In this part, let us focus on the integrability of the coupled system \eqref{uw-eq}.
In light of Proposition~\ref{prop-3}, in practice, we consider the equivalent equations \eqref{uw-eq-bb} and \eqref{uw-eq-dd},
i.e.,
\begin{subequations}\label{uw-eqeq}
\begin{gather}
\big(p-q+\h{u}-\t{u}\big)\big(w-\th{w}\big)-(p+q)\big(\t{w}-\h{w}\big)=0, \label{uw-eqeq-a}
\\
\big(p+q+u-\th u \big)\big(p-q+\h u-\t u\big)=p^2-q^2. \label{uw-eqeq-b}
\end{gather}
\end{subequations}
Let $U=(u,w)^{\rm T}$ and put the above system on the six faces of the cube (see Figure~\ref{Fig-1}($b$)).
Given initial values $U$, $\t U$, $\h U$, $\b U$, we have
\begin{subequations}\label{u-shift2}
\begin{gather}
\widehat{\widetilde{u}}=u-\frac{(p+q)\big(\t{u}-\h{u}\big)}{p-q+\h{u}-\t{u}}, \label{u-shift2a}
\\
\widetilde{\b{u}}=u-\frac{(p+r)\big(\t{u}-\b{u}\big)}{p-r+\b{u}-\t{u}}, \label{u-shift2b}
\\
\widehat{\b{u}}=u-\frac{(r+q)\big(\b{u}-\h{u}\big)}{r-q+\h{u}-\b{u}}, \label{u-shift2c}
\end{gather}
\end{subequations}
and
\begin{subequations}\label{w-shift2}
\begin{gather}
\widehat{\widetilde{w}}=w-\frac{(p+q)\big(\t{w}-\h{w}\big)}{p-q+\h{u}-\t{u}},\label{w-shift2a}
\\
\widetilde{\b{w}}=w-\frac{(p+r)\big(\t{w}-\b{w}\big)}{p-r+\b{u}-\t{u}},\label{w-shift2b}
\\
\widehat{\b{w}}=w-\frac{(r+q)\big(\b{w}-\h{w}\big)}{r-q+\h{u}-\b{u}}.\label{w-shift2c}
\end{gather}
\end{subequations}
Then, with these in hand, we find that the value of $\th{\b U}$ is same
no matter we calculate it from the equation on the top, or the front, or the right face,
where
\[
\th{\b u}=\frac{B}{A},\qquad \th{\b w}=\frac{C}{A},
\]
with
\begin{align*}
A={}& \big(p^{2}-q^{2}\big)\b{u}+\big(r^{2}-p^{2}\big)\h{u}+\big(q^{2}-r^{2}\big)\t{u}+(q-r)(p-q)(p-r),
\\[.5ex]
B={}& (p-q)(p+r)(q+r)\b{u}+(p+q)(q+r)(r-p)\h{u}+ (p+q)(p+r)(q-r)\t{u}
\\
& {}+\big(r^{2}-q^{2}\big)\b{u}\h{u}+\big(p^{2}-r^{2}\big)\b{u}\t{u}+\big(q^{2}-p^{2}\big)\t{u}\h{u},
 \\[.5ex]
C={}& (p+r)(q+r)(p-q+\h{u}-\t{u})\b{w}+(p+q)(q+r)(r-p+\t{u}-\b{u})\h{w}
\\
& {}+(p+q)(p+r)(q-r+\b{u}-\h{u})\t{w}.
\end{align*}
Thus, the system \eqref{uw-eqeq} (or \eqref{uw-eq}) is CAC.
Note that the lpKdV equation \eqref{uw-eqeq-b} itself is CAC as well.
This also indicates that the 9-point equation \eqref{w-eq} can be embedded into 3-dimensional space
and is consistent around a large cube consisting of 4 elementary cubes.

To obtain a Lax pair, following the standard procedure (cf.~\cite{ABS-2003,BHQK-2013,Nij-2002}),
we introduce
\[
\b{u}=\frac{g}{f},\qquad \b{w}=\frac{h}{f},
\]
and rewrite \eqref{u-shift2b}, \eqref{u-shift2c}, \eqref{w-shift2b} and \eqref{w-shift2c},
which leads us to the matrix system
\begin{gather}\label{Lax-pair}
\widetilde{\Phi}=\mathcal{N}\Phi,\qquad \widehat{\Phi}=\mathcal{M}\Phi,
\end{gather}
where $\Phi=(f,g,h)^{\rm T}$,
\begin{gather*}
\mathcal{N}= \begin{pmatrix}
 p-s-\widetilde{u} & 1 & 0
 \\
 (p-s-\widetilde{u})u-(p+s)\widetilde{u} & p+s+u & 0
 \\
 (p-s-\widetilde{u})w-(p+s)\widetilde{w} & w & p+s
 \end{pmatrix}
\end{gather*}
and
\begin{gather*}
\mathcal{M}= \begin{pmatrix}
 q-s-\widehat{u} & 1 & 0 \\
 (q-s-\widehat{u})u-(q+s)\widehat{u} & q+s+u & 0\\
 (q-s-\widehat{u})w-(q+s)\widehat{w} & w & q+s
 \end{pmatrix}\!.
\end{gather*}
Compatibility of \eqref{Lax-pair}, i.e., $\h{\t \Phi}=\t{\h \Phi}$ yields the four equations in \eqref{uw-sys}.
In light of Proposi\-tion~\ref{prop-3},~\eqref{Lax-pair} provides a Lax pair for any two equations of \eqref{uw-sys},
including \eqref{uw-eq}.

\section{Bilinear formulation}\label{sec-4}

Now that both the coupled system \eqref{uw-eq} and H2 equation \eqref{H2} are the consequence of the
coupled equation \eqref{BT-1}, we can get a bilinear form
for both \eqref{uw-eq} and H2 \eqref{H2} from \eqref{BT-1}.
To~achieve that, we introduce transformation
\begin{gather}\label{trans}
u=\frac{g}{f},\qquad w=\frac{\theta}{2 f},
\end{gather}
from which the coupled system \eqref{BT-1} is bilinearized as
\begin{subequations}\label{BT-1-bil}
\begin{gather}
2p\big(\widetilde{f}g-f\widetilde{g}\big)-2 g\widetilde{g}+\t \theta f+\theta \t{f}=0, \label{BT-1-bil-a}
\\
2q\big(\widehat{f}g-f\widehat{g}\big)-2g\widehat{g}+\h{\theta}f+\theta \h{f}=0. \label{BT-1-bil-b}
\end{gather}
\end{subequations}

In the following we present Casoratian solutions to the above bilinear equations.
We introduce~\cite{HZ-2009}
\begin{gather}\label{psi}
\psi_{i}(n,m,l)=\varrho^{+}_i k_{i}^{l}(p+k_{i})^{n}(q+k_{i})^{m}
+\varrho^{-}_i(-k_{i})^{l}(p-k_{i})^{n}(q-k_{i})^{m},
\end{gather}
where $\varrho^{\pm}_i$ and $k_{i}$ are constants, and define a vector
\begin{gather*}
\psi(n,m,l)=\big(\psi_{1}(n,m,l),\psi_{2}(n,m,l),\dots,\psi_{N}(n,m,l)\big)^{\rm T}.
\end{gather*}
A Casoratian composed by $\psi$ w.r.t shift in $l$ is a determinant defined as (we skip $n$, $m$ notations for convenience)
\begin{align*}
 |\psi(l), \psi(l+1), \psi(l+2),\dots, \psi(l+N-1)|
= |0,1,2,\dots, N-1|=\big|\h{N-1}\big|.
\end{align*}
In a similar way one can introduce
$ |0,1,2,\dots, N-2,N|=\big|\h{N-2},N\big|$, etc.
Note that the shorthand $\big|\h{N-1}\big|$ is a conventional notation in expression of
Casoratian or Wronskian \cite{FreN-PLA-1983}
and we do not think it makes any confusion with the hat-shift for $m$-direction in context.
A useful identity in verifying solution in Caosratian/Wronskian form is \cite{FreN-PLA-1983}
\begin{gather}\label{id-1}
|\bK,\ba,\bb||\bK,\bc,\bd|-|\bK,\ba,\bc||\bK,\bb,\bd|+|\bK,\ba,\bd||\bK,\bb,\bc|=0,
\end{gather}
where $\bK$ is a $N\times (N-2)$ matrix, and $\ba$, $\bb$,
$\bc$ and $\bd$ are $N$th-order column vectors.

Now let us present solutions to the bilinear equation \eqref{BT-1-bil}.

\begin{Theorem}
The equations \eqref{BT-1-bil} provide a bilinear form for {\rm H2} equation \eqref{H2} via the transformation \eqref{yzuw-y}
together with \eqref{trans}.
Equations~\eqref{BT-1-bil} have the following solution
\begin{subequations}\label{fg-theta}
\begin{gather}
f=\big|\widehat{N-1}\big|,\qquad g=\big|\widehat{N-2},N\big|, \label{fg}
\\
\theta =\big|\widehat{N-3},N-1,N\big|+\big|\widehat{N-2},N+1\big|,\label{theta}
\end{gather}
\end{subequations}
composed by the vector $\psi$ with entries \eqref{psi}, in terms of shifts in $l$.
\end{Theorem}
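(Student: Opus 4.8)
The plan is to prove the two assertions in turn: that \eqref{BT-1-bil} is a bilinear form for H2, and that \eqref{fg-theta} solves it. The first is quick once the earlier structure is granted. Substituting \eqref{trans} and dividing \eqref{BT-1-bil-a} by $f\t f$ and \eqref{BT-1-bil-b} by $f\h f$ reproduces exactly the three-point relations \eqref{BT-1a}, \eqref{BT-1b}, so \eqref{BT-1-bil} is just \eqref{BT-1-eq} in $\tau$-function variables. Composing with \eqref{yzuw}, the system \eqref{BT-1} turns into \eqref{BT-2}, whose compatibility yields \eqref{H2} by Proposition~\ref{prop-333}; hence, read through \eqref{trans} and \eqref{yzuw-y}, \eqref{BT-1-bil} is a bilinear form for H2.

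For the solution my plan is a direct substitution reducing to the determinantal identity \eqref{id-1}. First I would record the column shift relations carried by the dispersion in \eqref{psi}: writing $\psi(l)$ for the column $\psi(n,m,l)$, a single $n$- or $m$-shift acts as $E_n\psi(l)=p\,\psi(l)+\psi(l+1)$ and $E_m\psi(l)=q\,\psi(l)+\psi(l+1)$, i.e.\ it sends each column to $p$ (resp.\ $q$) times itself plus the next $l$-shifted column. Expanding $\t f,\t g,\t\theta$ and their hatted analogues by multilinearity then rewrites every ingredient of \eqref{BT-1-bil} as a $p$-weighted combination of Casoratians whose columns lie in $\{0,1,\dots,N+1\}$, the indices $N$ and $N+1$ recording the departure from $f=\big|\widehat{N-1}\big|$.

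With these expansions in hand I would substitute into \eqref{BT-1-bil-a}, use the antisymmetric structure of $\t f g-f\t g$ to cancel the bulk of the diagonal terms, and collect what remains into the Pl\"ucker relation \eqref{id-1}. The natural configuration is to take $\bK$ to be the shared $N\times(N-2)$ block of columns $0,\dots,N-3$ and to let $\ba,\bb,\bc,\bd$ run over the four columns $N-2,N-1,N,N+1$; then $f$, $g$, and the two determinants making up $\theta$ are precisely the minors $|\bK,\ba,\bb|$, $|\bK,\ba,\bc|$, $|\bK,\bb,\bc|$, $|\bK,\ba,\bd|$, so that the surviving combination is an instance of \eqref{id-1}. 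Equation \eqref{BT-1-bil-b} then follows by the manifest $p\leftrightarrow q$, $E_n\leftrightarrow E_m$ symmetry. As a cross-check, and a cleaner alternative, I would note that $f$, $g$, $\theta$ can be identified with the Cauchy-matrix $\tau$-function and with $S^{(0,0)}$, $S^{(1,0)}$ for diagonal $\bK$, so that the solution property is inherited from the derivation of \eqref{BT-1-eq} in Section~\ref{sec-2}.

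I expect the $\theta$-terms $\t\theta f+\theta\t f$ to be the main obstacle. Since $\theta$ is itself a sum of two Casoratians, one already carrying the excitation $N+1$, its shift produces the widest spread of column indices, and forcing $\t\theta f+\theta\t f$ to conspire with the H1-type block $2p(\t f g-f\t g)-2g\t g$ so that everything collapses to a single vanishing Pl\"ucker expression --- rather than to several unrelated applications of \eqref{id-1} that must then be shown to combine --- is the delicate point. Pinning down the exact four columns $\ba,\bb,\bc,\bd$ and shared block $\bK$ for which all the $p$-weighted cross terms cancel at once is the step I would treat with the most care.
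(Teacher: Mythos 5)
Your first paragraph matches the paper's (implicit) argument and is fine: since \eqref{BT-1-bil} is exactly bilinear in $(f,g,\theta)$ and its shift, dividing by $f\t f$ (resp.\ $f\h f$) recovers \eqref{BT-1-eq} with no separation constant, and H2 then follows through \eqref{yzuw} and Proposition~\ref{prop-333}.

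The gap is in the Casoratian verification. Your shift relation $\t{\psi}(l)=p\,\psi(l)+\psi(l+1)$ is correct, but expanding $\t f=\big|(p+E_l)\psi(0),\dots,(p+E_l)\psi(N-1)\big|$ by multilinearity does \emph{not} confine the excited columns near the top of the range: the surviving terms are $\t f=\sum_{k=0}^{N}p^{k}\,|0,\dots,k-1,k+1,\dots,N|$, an $(N{+}1)$-term sum (and the expansions of $\t g$, $\t\theta$ are worse). Consequently the result cannot be matched to a single instance of \eqref{id-1} with $\bK=\big(\widehat{N-3}\big)$ and $\ba,\bb,\bc,\bd$ the columns $N-2,N-1,N,N+1$: that identity relates only \emph{unshifted} Casoratians ($f$, $g$, $|\widehat{N-2},N+1|$, $|\widehat{N-3},N-1,N|$, $|\widehat{N-3},N,N+1|$, $|\widehat{N-3},N-1,N+1|$) and so can never produce the shifted quantities $\t f$, $\t g$, $\t\theta$ that \eqref{BT-1-bil} contains; at least one of the four distinguished columns must itself be a shifted entry vector. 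The paper's proof handles this by (i) splitting $\theta=h+s$ with $h=|\widehat{N-2},N+1|$, $s=|\widehat{N-3},N-1,N|$ and decomposing \eqref{BT-1-bil-b} as $A+B=0$, where $A=\h f(h+qg)-\h g(g+qf)+f\h s$ and $B=f(\h h-q\h g)-g(\h g-q\h f)+\h f s$ vanish \emph{separately}, and (ii) proving $A=0$ by applying the \emph{down}-shift, for which one-term formulas such as $q^{N-2}\uh{f}=-\big|\widehat{N-2},\uh{\psi}(N-2)\big|$ hold, so that $q^{N-2}\uh{A}$ becomes exactly one Pl\"ucker relation \eqref{id-1} with $\bd=\uh{\psi}(N-2)$ a shifted column ($B=0$ is quoted from \cite{HZ-2009}). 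So two separate applications of \eqref{id-1}, each involving a shifted column, are required --- precisely the scenario you hoped to avoid, and your proposed single configuration would not close. Your fallback identification of $f$, $g$, $\theta$ with the Cauchy-matrix $\tau$-function and $S^{(0,0)}$, $S^{(1,0)}$ for diagonal $\bK$ is a legitimate alternative route, but as stated it relocates all the work to proving that identification rather than supplying it.
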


\begin{proof}
Let us first prove \eqref{BT-1-bil-b}. Introduce
\begin{gather}\label{hs}
h=\big|\widehat{N-2},N+1\big|,\qquad s=\big|\widehat{N-3},N-1,N\big|
\end{gather}
and rewrite \eqref{BT-1-bil-b} as
\begin{gather*}
A+B=0,
\end{gather*}
where
\begin{gather*}
A=\h f(h+q g)-\h g(g +q f) +f \h s, 
\qquad
B= f\big(\widehat{h}-q\widehat{g}\big)-g\big(\widehat{g}-q\widehat{f}\big)+\widehat{f}s. 
\end{gather*}
Next, we prove both $A$ and $B$ are zero when $f$, $g$, $h$, $s$ are Casoratians defined above.
Note that with $\psi$ given by \eqref{psi} as elementary entries,
shifts of $f$, $g$, $h$ satisfy (see formulae given in Appendix in \cite{HZ-2009})
\begin{gather*}
q^{N-2}\uh{f}=-\big|\widehat{N-2},\uh{\psi}(N-2)\big|,
\\
q^{N-2}\big(\uh{h}+q\uh{g}\big)=-\big|\widehat{N-3},N,\uh{\psi}(N-2)\big|,
\\
q^{N-2}\big(\uh{g}+q\uh{f}\big)=-\big|\widehat{N-3},N-1,\uh{\psi}(N-2)\big|.
\end{gather*}
Then we have
\begin{align*}
q^{N-2} \uh A={}& q^{N-2}\big[f\big(\uh{h}+q\uh{g}\big)-g\big(\uh{g}+q\uh{f}\big)+\uh{f}s\big]
\\
={}&-\big|\h{N-1}\big|\big|\h{N-3},N,\uh{\psi}(N-2)\big|
+\big|\widehat{N-2},N\big|\big|\widehat{N-3},N-1,\uh{\psi}(N-2)\big|
\\
&-\big|\h{N-2},\uh{\psi}(N-2)\big|\big|\widehat{N-3},N-1,N\big|,
\end{align*}
which vanishes in light of the identity \eqref{id-1},
where $\bK=\big(\widehat{N-3}\big)$, $\ba=\psi(N-2)$, $\bb=\psi(N-1)$, $\bc=\psi(N)$ and $\bd=\uh{\psi}(N-2)|$.
$B=0$ is the equation (4.21e) in \cite{HZ-2009}, which is already proved.
Thus, equation~\eqref{BT-1-bil-b} holds. Equation~\eqref{BT-1-bil-a} can be proved in a similar way.
\end{proof}

The proof of the this theorem indicates that the bilinear equations \eqref{BT-1-bil} can further be decoupled.

\begin{Corollary}
The bilinear form \eqref{BT-1-bil} can be decoupled into
\begin{subequations}\label{BT-1-bil-2}
\begin{gather}
\t f(h+p g)-\t g(g +p f) +f \h s=0,
\\
f\big(\t{h}-p\t{g}\big)-g\big(\t{g}-p\t{f}\big)+\t{f}s=0,
\\
\h f(h+q g)-\h g(g +q f) +f \h s=0,
\\
f\big(\h{h}-q\h{g}\big)-g\big(\h{g}-q\h{f}\big)+\h{f}s=0,
\end{gather}
\end{subequations}
which admit Casoratian solutions given by \eqref{fg} and \eqref{hs} with entries \eqref{psi}.
This provides an alternative bilinear form for the coupled system \eqref{BT-1},
therefore both \eqref{BT-1-bil} and \eqref{BT-1-bil-2} can serve as bilinear forms for
{\rm H2} equation \eqref{H2} and the coupled system \eqref{uw-eq} or \eqref{uw-eqeq}.
\end{Corollary}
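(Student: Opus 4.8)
The plan is to leverage the decomposition already obtained in the proof of the preceding theorem. The starting point is that the auxiliary Casoratian $\theta$ in \eqref{theta} is the sum $\theta=h+s$ of the two Casoratians $h$ and $s$ introduced in \eqref{hs}. First I would substitute $\theta=h+s$ (so that $\h\theta=\h h+\h s$) into the coupled bilinear equation \eqref{BT-1-bil-b} and regroup; a direct rearrangement shows that \eqref{BT-1-bil-b} is nothing but $A+B=0$, where $A$ and $B$ are the left-hand sides of the third and fourth equations of \eqref{BT-1-bil-2}. This is exactly the split recorded in the theorem's proof.

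The key point is that the theorem established more than $A+B=0$: using the Casoratian identity \eqref{id-1} together with the relation from \cite{HZ-2009}, it showed that $A$ and $B$ each vanish on the Casoratian solutions \eqref{fg}, \eqref{hs}, not merely that their sum does. Hence the third and fourth equations of \eqref{BT-1-bil-2} already hold independently. Next I would treat the $n$-direction equation \eqref{BT-1-bil-a} by the same device: substituting $\theta=h+s$, $\t\theta=\t h+\t s$ and regrouping splits it into the first two equations of \eqref{BT-1-bil-2}, which are the exact tilde-analogues of $A$ and $B$ with the spacing parameter $p$ in place of $q$. To verify that each half annihilates the Casoratians I would invoke the tilde-shift analogues of the Casoratian shift formulae used in the hat-direction and apply \eqref{id-1} once more, with the column vectors taken to be $\psi(N-2)$, $\psi(N-1)$, $\psi(N)$ and the relevant backward-shifted entry.

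Finally, to conclude that \eqref{BT-1-bil-2} is a genuine alternative bilinear form, I would observe that summing each decoupled pair returns the corresponding equation of \eqref{BT-1-bil}, so that \eqref{BT-1-bil-2} implies \eqref{BT-1-bil}; combined with the fact that \eqref{fg} and \eqref{hs} solve all four equations and with the transformations \eqref{yzuw-y} and \eqref{trans}, this gives the asserted bilinear form for H2 \eqref{H2} and for the coupled system \eqref{uw-eq}. The only genuine computation is the $n$-direction verification, and it runs entirely parallel to the already-completed hat-direction case; the single point deserving care, rather than a real obstacle, is that decoupling is a statement about solutions, so one must check that each individual half --- not just the sum --- kills the Casoratian, which is precisely what the step-by-step use of \eqref{id-1} in the theorem provides.
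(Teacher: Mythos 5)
Your proposal is correct and follows essentially the same route as the paper: the Corollary is exactly the observation that the theorem's proof established $A=0$ and $B=0$ separately (not merely $A+B=0$) via the identity \eqref{id-1} and the result quoted from \cite{HZ-2009}, with the $n$-direction handled by the parallel tilde-shifted computation. Your reading of the first equation of \eqref{BT-1-bil-2} as the tilde-analogue (i.e., with $f\t{s}$ rather than the printed $f\h{s}$) is the intended one.
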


In addition, since $\varrho^{\pm}_i$ in \eqref{psi} are arbitrary parameters independent of $(n,m,l)$,
we introduce a dummy variable $x$ and replace $\varrho^{\pm}_i$ by ${\rm e}^{\pm k_i x}\varrho^{\pm}_i$, and then we have
\begin{gather}\label{psi-x}
\psi_{i}(x,l)=\varrho^{+}_i {\rm e}^{ k_i x} k_{i}^{l}(p+k_{i})^{n}(q+k_{i})^{m}
+\varrho^{-}_i {\rm e}^{- k_i x} (-k_{i})^{l}(p-k_{i})^{n}(q-k_{i})^{m},
\end{gather}
with which we redefine $\psi(x,l)=(\psi_1, \psi_2,\dots, \psi_N)^{\rm T}$.
Obviously,
\begin{gather*}
\psi(x,l+1)=\partial_x \psi(x,l).
\end{gather*}
Thus we have
\[
f=\big|\h{N-1}\big|=\big|\psi(x,l), \partial_x \psi(x,l), \partial_x^2 \psi(x,l),\dots, \partial_x^{N-1} \psi(x,l)\big|
=\big|\h{N-1}\big|_{[x]}^{},
\]
where the subscript $[x]$ means the determinant is a Wronskian composed by $\psi(x,l)$ and its derivatives
w.r.t.~$x$.
Besides, for the Casoratians $g$ and $\theta$ defined in \eqref{fg-theta}, we have
\[
g=f_x=\big|\h{N-2},N\big|_{[x]}^{},\qquad \theta =f_{xx}=\big|\widehat{N-3},N-1,N\big|_{[x]}^{}+\big|\widehat{N-2},N+1\big|_{[x]}^{}.
\]
Thus we reach the following.
\begin{Corollary}
In terms of Wronskians composed by $\psi(x,l)$ with entries \eqref{psi-x},
the bilinear system \eqref{BT-1-bil} is rewritten as
\begin{subequations}\label{BT-1-bil-3}
\begin{align}
& \big(D^2_x+2pD_x\big)f\cdot \t f=0,\label{BT-1-bil-3-a}\\
& \big(D^2_x+2qD_x\big)f\cdot \h f=0,
\end{align}
\end{subequations}
where $D$ is the Hirota bilinear operator defined as {\rm \cite{Hirota-1974}}
\[
D_x^i D_t^j f(x,t) \cdot g(x,t)=(\partial_x-\partial_{x'})^i (\partial_t-\partial_{t'})^j f(x,t) g(x',t')|_{x'=x,t'=t}.
\]
\end{Corollary}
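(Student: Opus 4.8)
The plan is to treat the statement as a direct consequence of the two identities $g=f_x$ and $\theta=f_{xx}$ recorded immediately above, combined with the fact that the continuous variable $x$ and the lattice shifts act independently. First I would note that, because $\psi$ depends on $n$ and $m$ only through the factors $(p\pm k_i)^n$ and $(q\pm k_i)^m$ while $x$ enters solely through ${\rm e}^{\pm k_i x}$, the derivative $\partial_x$ commutes with both $E_n$ and $E_m$. Consequently $\t g=\partial_x\t f$ and $\t\theta=\partial_x^2\t f$, and likewise $\h g=\partial_x\h f$ and $\h\theta=\partial_x^2\h f$. This lets me write every Casoratian entering \eqref{BT-1-bil} as an $x$-derivative of $f$, $\t f$ or $\h f$.

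If one wishes to keep the argument self-contained, the identities $g=f_x$ and $\theta=f_{xx}$ themselves follow from differentiating the Wronskian $f=\big|\h{N-1}\big|_{[x]}$ column by column: since $\partial_x$ sends the column $\partial_x^{\,j}\psi$ to $\partial_x^{\,j+1}\psi$, every term in which a column collides with its right neighbour vanishes, and a single surviving term gives $f_x=\big|\h{N-2},N\big|_{[x]}=g$; differentiating once more leaves exactly the two terms $\big|\h{N-3},N-1,N\big|_{[x]}$ and $\big|\h{N-2},N+1\big|_{[x]}$, whose sum is $\theta$ by \eqref{theta}.

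With these in hand, the proof reduces to substitution. Replacing $g$, $\t g$, $\theta$, $\t\theta$ in \eqref{BT-1-bil-a} by $f_x$, $\t f_x$, $f_{xx}$, $\t f_{xx}$ and regrouping, the left-hand side becomes
\[
2p\big(f_x\t f-f\t f_x\big)+\big(f_{xx}\t f-2f_x\t f_x+f\t f_{xx}\big).
\]
I would then recognize the first bracket as $2p\,D_xf\cdot\t f$ and the second as $D_x^2f\cdot\t f$ directly from the definition of the Hirota operator, yielding $\big(D_x^2+2pD_x\big)f\cdot\t f=0$, that is, \eqref{BT-1-bil-3-a}. The second equation is obtained identically with $p$ and the tilde-shift replaced by $q$ and the hat-shift. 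The only step demanding genuine care is matching signs against the Hirota convention in the regrouping; everything else is mechanical, so I expect no substantial obstacle beyond the Wronskian bookkeeping in re-deriving $\theta=f_{xx}$.
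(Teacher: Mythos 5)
Your proposal is correct and follows essentially the same route as the paper, which simply establishes $g=f_x$ and $\theta=f_{xx}$ (with the shifts commuting with $\partial_x$) and then reads off $2p\big(\t f g-f\t g\big)=2p\,D_xf\cdot\t f$ and $-2g\t g+\t\theta f+\theta\t f=D_x^2f\cdot\t f$ from the definition of the Hirota operator. Your column-by-column differentiation of the Wronskian is exactly the standard bookkeeping the paper leaves implicit, and the sign matching in your regrouping is consistent with the stated convention.
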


\section{Continuous counterparts}\label{sec-5}

\subsection{Continuum limits}\label{sec-5-1}

Let us start to investigate continuum limits of the coupled system \eqref{uw-eqeq} via a two-step scheme,
which is based on the deformation of the H2 plane wave factor \cite{HZ-2009}:
\begin{align*}
\rho(n,m)=\biggl(\frac{p-k}{p+k}\biggr)^{n}\biggl(\frac{q-k}{q+k}\biggr)^{m}
& \longrightarrow {\rm e}^{-2k \xi +o(\frac{1}{p})}\biggl(\frac{q-k}{q+k}\biggr)^{m}
\\[.5ex]
& \longrightarrow {\rm e}^{-2k (\xi+\eta) -\frac{2 k^{3}}{3q^{2}}\eta + o\left(\frac{1}{q^3}\right)},
\end{align*}
where
\[\xi=\frac{n}{p},\qquad \eta=\frac{m}{q}\]
which are finite when $n\to \infty$, $p\to\infty$ and $m\to \infty$, $q\to\infty$.

In the first step, after Taylor expanding the coupled system \eqref{uw-eqeq}
into power series of $\frac{1}{p}$, the leading term in each equation w.r.t.~$\frac{1}{p}$ gives rise to
\begin{subequations}\label{uw-eq-sd}
\begin{gather}
\partial_{\xi}\big(w+\h w\big)+\big(w -\h w\big)\big(u -\h u +2 q\big)=0, \label{uw-eq-sd-a}
\\
\partial_{\xi}\big(u+ \h u\big) +\big(q+u-\h u\big)^{2}- q^{2}=0, \label{uw-eq-sd-b}
\end{gather}
\end{subequations}
where for convenience we still employ $u$ and $w$ to denote functions of $(\xi, m)$ in this step.
Note that equation~\eqref{uw-eq-sd-a} yields a form $u -\h u=A(w)$,
and substituting it into equation~\eqref{uw-eq-sd-b} yields a~form $\partial_{\xi}(u+ \h u)=B(w)$,
where $A(w)$ and $B(w)$ are some functions of $w$, $\h w$ and their derivatives w.r.t.~$\xi$.
Then, eliminating $u$ from them we get a differential-difference equation for~$w$, which is
\begin{gather}
\frac{1}{\big(w -\h w\big)^{2}}\Big[ 2q\big(w -\h w\big)\partial_{\xi}\big(w+\h w\big)+\partial_{\xi}\big(w^{2} -\h w^{2}\big)
+\big(\partial_{\xi}\big(w+\h w\big)\big)^{2}-\big(w -\h w\big)\partial_{\xi\xi}\big(w+\h w\big) \Big]\nonumber
\\ \qquad
{}+\frac{1}{\big(\h w -\hh w\big)^{2}}\Big[{-}2q\big(\h w -\hh w\big)\partial_{\xi}\big(\h w +\hh w\big)
+\partial_{\xi}\big(\h w^{2} -\hh w{}^{2}\big)-\big(\partial_{\xi}\big(\h w+\hh w\big)\big)^{2}\nonumber
\\ \qquad\hphantom{+\frac{1}{\big(\h w -\hh w\big)^{2}}\Big[}
{}-\big(\h w -\hh w\big) \partial_{\xi\xi}\big(\h w+\hh w\big)\Big]=0.
\label{w-eq-sd}
\end{gather}

In the second step, first, expanding \eqref{uw-eq-sd} in terms of $\frac{1}{q}$,
then, introducing
\begin{gather*}
x=\xi+\eta,\qquad t=\frac{\eta}{12q^2},
\end{gather*}
and reorganizing the expansion w.r.t.~$(x,t)$ and the derivatives w.r.t.\ them,
finally, from the leading terms we get
\begin{subequations}\label{uw-eq-cc}
\begin{align}
& w_{t}-6w_{x}u_{x}-w_{xxx}=0, \label{uw-eq-cc-a}\\
& u_{t}-6u_{x}^{2}-u_{xxx}=0, \label{uw-eq-cc-b}
\end{align}
\end{subequations}
which contains the potential KdV equation \eqref{uw-eq-cc-b} and its companion \eqref{uw-eq-cc-a}.
Again, one may eliminate $u$ from the system and obtain an equation of $w$:
\begin{gather}
w_{t}[6w_{2x}^{3}-10w_{x}w_{2x}w_{3x}+3w_{x}^{2}w_{4x}]+2w_{t}^{2}w_{x}w_{2x}
-3w_{xt}w_{x}[w_{x}w_{t}+2w_{2x}^{2}-2w_{x}w_{3x}]\nonumber
\\ \qquad
{}+3w_{x}^{2}w_{2x}w_{2xt}-2w_{x}^{3}w_{3xt}+w_{2t}w_{x}^{3}
+w_{2x}w_{3x}[8w_{x}w_{3x}-6w_{2x}^{2}]\nonumber
\\ \qquad
{}+6w_{x}w_{4x}[w_{2x}^{2}-w_{x}w_{3x}]-3w_{x}^{2}w_{2x}w_{5x}+w_{x}^{3}w_{6x}=0,
\label{w-eq-cc}
 \end{gather}
where the shorthand $w_{kx}$ means $\partial^k_x w$.

Later we will have a closer look at this system in Section~\ref{sec-5-3}.
We also note that higher order equations in the continuum limit
of the coupled quad-system \eqref{uw-eqeq}
will be explored in Appendix.

\subsection[Cauchy matrix approach to (5.3)]
{Cauchy matrix approach to (\ref{uw-eq-cc})}

A continuous version of the Cauchy matrix approach to the KdV equation has been developed in \cite{XZZ-2014}.
In the following we derive the coupled system \eqref{uw-eq-cc} using this approach.

\subsubsection{Derivation}

We will start from the Sylvester equation \eqref{Sylvester} and the function $S^{(i,j)}$
defined as in \eqref{S-ij}, while in this case $\bM$, $\br$ and $\bc^{\rm T}$ depends on $(x,t)$.
An auxiliary vector $\bu^{(i)}$ is introduced as
\begin{gather*}
\bu^{(i)} = (\bI+\bM)^{-1} \bK^i \br,\qquad i\in \mathbb{Z}.
 \end{gather*}

Let us first recall some useful relations obtained in \cite{XZZ-2014}. Proposition \ref{prop-2} holds as well in the continuous case.

{\samepage\begin{Proposition}\label{prop-4}\quad
\begin{enumerate}\itemsep=0pt
\item[$1.$]~$S^{(i,j)}$ satisfies the recursive relation for $i,j\in \mathbb{Z}$,
\begin{gather*}
S^{(i,j+2k)}=S^{(i+2k,j)}-\sum_{l=0}^{2k-1}(-1)^{l}S^{(2k-1-l,j)}S^{(i,l)},\qquad k=1,2,\dots,
\end{gather*}
and in particular, when $k=1$ we have
\begin{gather}\label{S-ij-k=1}
S^{(i,j+2)}=S^{(i+2,j)}-S^{(i,0)}S^{(1,j)}+S^{(i,1)}S^{(0,j)}.
\end{gather}
\item[$2.$] $\bK$ and any matrix similar to it lead to same $S^{(i,j)}$.
\item[$3.$] The symmetric property holds:
\begin{gather*}
 S^{(i,j)} = S^{(j,i)}.
\end{gather*}
\end{enumerate}
\end{Proposition}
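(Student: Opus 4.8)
The plan is to treat all three assertions as purely algebraic consequences of the Sylvester equation \eqref{Sylvester} and the definition \eqref{S-ij}, observing first that none of them sees whether the data $\bM$, $\br$, $\bc$ depend on the discrete variables $(n,m,l)$ or on the continuous $(x,t)$. In particular, items~2 and~3 are word for word the content of Proposition~\ref{prop-2}, which (as noted just above) continues to hold here; so I would dispatch them by simply invoking Proposition~\ref{prop-2}, the similarity invariance following from conjugating $(\bK,\bM,\br,\bc)$ by a change of basis $T$ that sends $\bK\mapsto T\bK T^{-1}$, $\bM\mapsto T\bM T^{-1}$, $\br\mapsto T\br$, $\bc\mapsto (T^{-1})^{\rm T}\bc$, under which \eqref{Sylvester} is preserved and the factors $T^{-1}T$ cancel telescopically inside \eqref{S-ij}. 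The genuinely new statement is the recursion in item~1.

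For item~1, the key object is an anticommutation relation between powers of $\bK$ and $\bM$. Starting from \eqref{Sylvester} in the form $\bK\bM=-\bM\bK+\br\bc^{\rm T}$ and moving one factor of $\bK$ at a time through $\bM$, I would prove by induction on $n$ that
\[
\bK^{n}\bM=(-1)^{n}\bM\bK^{n}+\sum_{l=0}^{n-1}(-1)^{l}\bK^{n-1-l}\br\bc^{\rm T}\bK^{l}.
\]
Specialising to the even power $n=2k$ kills the sign on the leading term and gives
\[
\bK^{2k}\bM-\bM\bK^{2k}=\sum_{l=0}^{2k-1}(-1)^{l}\bK^{2k-1-l}\br\bc^{\rm T}\bK^{l},
\]
which is the heart of the argument and already contains \eqref{S-ij-k=1} as the case $k=1$.

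Next I would convert this into a statement about $(\bI+\bM)^{-1}$. Since $\bK^{2k}$ commutes with $\bI$, the left-hand side above equals $\bK^{2k}(\bI+\bM)-(\bI+\bM)\bK^{2k}$; multiplying on the left and right by $(\bI+\bM)^{-1}$ and rearranging yields
\[
\bK^{2k}(\bI+\bM)^{-1}=(\bI+\bM)^{-1}\bK^{2k}-(\bI+\bM)^{-1}\Bigg[\sum_{l=0}^{2k-1}(-1)^{l}\bK^{2k-1-l}\br\bc^{\rm T}\bK^{l}\Bigg](\bI+\bM)^{-1}.
\]
Inserting this into $S^{(i,j+2k)}=\bc^{\rm T}\bK^{j}\big[\bK^{2k}(\bI+\bM)^{-1}\big]\bK^{i}\br$, the first term produces $S^{(i+2k,j)}$, while each rank-one block $\big(\bK^{2k-1-l}\br\big)\big(\bc^{\rm T}\bK^{l}\big)$ splits the scalar into the product of $\bc^{\rm T}\bK^{j}(\bI+\bM)^{-1}\bK^{2k-1-l}\br=S^{(2k-1-l,j)}$ and $\bc^{\rm T}\bK^{l}(\bI+\bM)^{-1}\bK^{i}\br=S^{(i,l)}$, giving exactly $-\sum_{l=0}^{2k-1}(-1)^{l}S^{(2k-1-l,j)}S^{(i,l)}$ and hence the claimed recursion.

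I expect the only real care to be bookkeeping: getting the exponents and signs right in the induction, and, when the rank-one factors are split, making sure the left factor carries the $\bc^{\rm T}\bK^{j}$ and the right factor the $\bK^{i}\br$ so that the two scalars land as $S^{(2k-1-l,j)}$ and $S^{(i,l)}$ rather than as some transposed pairing. Had one to establish the symmetry in item~3 from scratch, that would be the hard part — it is not visible from \eqref{S-ij} directly and needs the explicit solution of the Sylvester equation — but here it comes for free from Proposition~\ref{prop-2} together with \eqref{S-sym}.
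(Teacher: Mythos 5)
Your proof is correct, and it is worth noting that the paper itself does not prove this proposition at all: it simply recalls it from the reference [XZZ-2014] (``Let us first recall some useful relations obtained in\ldots''), so you have supplied the argument the paper omits. Your route is in fact the standard one underlying that reference. The induction
\begin{gather*}
\bK^{n}\bM=(-1)^{n}\bM\bK^{n}+\sum_{l=0}^{n-1}(-1)^{l}\bK^{n-1-l}\br\bc^{\rm T}\bK^{l}
\end{gather*}
checks out (the boundary term $(-1)^{n}\br\bc^{\rm T}\bK^{n}$ produced by one application of the Sylvester equation is exactly the $l=n$ summand), the passage to $\bK^{2k}(\bI+\bM)^{-1}$ is legitimate since $\bK^{2k}\bM-\bM\bK^{2k}=\bK^{2k}(\bI+\bM)-(\bI+\bM)\bK^{2k}$, and the splitting of each rank-one block gives the scalars $S^{(2k-1-l,j)}$ and $S^{(i,l)}$ with the right pairing; the $k=1$ case reproduces \eqref{S-ij-k=1} with the correct signs. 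Note that your commutator identity is precisely the one the paper uses implicitly in the displayed relation for $(\bI+\bM)\bK^{2k}\bu^{(i)}$ immediately after the proposition, so your argument dovetails with the surrounding text. Two minor caveats: the recursion is asserted for all $i,j\in\mathbb{Z}$, so you should remark that $\bK$ is invertible (no eigenvalue of $\bK$ can be $0$, since $\bK$ and $-\bK$ share no eigenvalues), which makes negative powers meaningful and leaves your argument untouched; and your parenthetical justification of item~2 by conjugating $\br\mapsto T\br$, $\bc\mapsto (T^{-1})^{\rm T}\bc$ proves invariance under a simultaneous transformation of $(\bK,\br,\bc)$, whereas Proposition~\ref{prop-2} fixes $\br$, $\bc$ and varies only $\bK$ --- but since you ultimately just invoke Proposition~\ref{prop-2} for items~2 and~3, exactly as the paper does, this does not affect the correctness of your proof.
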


}\noindent
In addition, a relation for $\bu^{(i)}$ is \cite{XZZ-2014}
\begin{gather*}
(\bI+\bM)\bK^{2k}\bu^{(i)} = \bK^{2k+i} \br
 -\sum_{l=0}^{2k-1}(-1)^{l}\bK^{2k-1-l}\br\bc^{\rm T} \bK^{l} \bu^{(i)}.
\end{gather*}
Multiplied $(\bI+\bM)^{-1}$ from the left, one first has
\[
\bK^{2k}\bu^{(i)}=(\bI+\bM)^{-1}\bK^{2k+i} \br-
\sum_{l=0}^{2k-1}(-1)^{l}\bigl((\bI+\bM)^{-1}\bK^{2k-1-l}\br\bigr) \bc^{\rm T} \bK^{l} \bu^{(i)},
\]
and then, in light of the definitions of $\bu^{(i)}$ and $S^{(i,l)}$, it follows that
\begin{gather}\label{u-i-rec}
\bu^{(2k+i)}=\bK^{2k}\bu^{(i)}+\sum_{l=0}^{2k-1}(-1)^{l}\bu^{(2k-1-l)}S^{(i,l)}.
\end{gather}

Introducing dispersion relation (cf.~\cite{XZZ-2014})
\begin{alignat*}{3}
& \br_{x}=\bK \br,\qquad&& \bc_{x}=\bK^{\rm T}\bc,&\\
& \br_{t}=4\bK^{3}\br,\qquad&& \bc_{t}=4\big(\bK^{\rm T}\big)^{3}\bc.&
\end{alignat*}
one has (cf.~\cite{XZZ-2014})
\begin{subequations}\label{u-i-xt}
\begin{gather}\label{u-i-x}
\bu_{x}^{(i)}=\bu^{(i+1)}-S^{(i,0)}\bu^{(0)},\\
\bu_{t}^{(i)}=4\big(\bu^{(i+3)}-S^{(i,0)}\bu^{(2)}-S^{(i,2)}\bu^{(0)}+S^{(i,1)}\bu^{(1)}\big),
\label{u-i-t}
\\
S_{x}^{(i,j)}=S^{(i+1,j)}+S^{(i,j+1)}-S^{(i,0)}S^{(0,j)},\nonumber 
\\
S_{t}^{(i,j)}=4\big(S^{(i+3,j)}+S^{(i,j+3)}+S^{(i,1)}S^{(1,j)}-S^{(i,0)}S^{(2,j)}-S^{(i,2)}S^{(0,j)}\big), \nonumber 
\end{gather}
\end{subequations}
and
\begin{gather*}
S_{xx}^{(i,j)}=S^{(i+2,j)}+S^{(i,j+2)}-2S^{(i+1,0)}S^{(0,j)}-2S^{(i,0)}S^{(0,j+1)}+2S^{(i+1,j+1)}
\\ \hphantom{S_{xx}^{(i,j)}=}
{}-S^{(i,0)}S^{(1,j)}-S^{(i,1)}S^{(0,j)}+2S^{(i,0)}S^{(0,0)}S^{(0,j)},
\\
S_{xxx}^{(i,j)}=S^{(i+3,j)}+S^{(i,j+3)}+3S^{(i+2,j+1)}+3S^{(i+1,j+2)}-3S^{(i+2,0)}S^{(0,j)}
\\ \hphantom{S_{xxx}^{(i,j)}=}
{}-3S^{(i,0)}S^{(0,j+2)}-6S^{(i+1,0)}S^{(0,j+1)}-3S^{(i+1,0)}S^{(1,j)}-3S^{(i,1)}S^{(0,j+1)}
\\ \hphantom{S_{xxx}^{(i,j)}=}
{}-S^{(i,2)}S^{(0,j)}-S^{(i,0)}S^{(2,j)}-3S^{(i+1,1)}S^{(0,j)}-3S^{(i,0)}S^{(1,j+1)}
\\ \hphantom{S_{xxx}^{(i,j)}=}
{}+6S^{(i+1,0)}S^{(0,0)}S^{(0,j)}+6S^{(i,0)}S^{(0,0)}S^{(0,j+1)}-2S^{(i,1)}S^{(1,j)}
+3S^{(i,0)}S^{(0,0)}S^{(1,j)}
\\ \hphantom{S_{xxx}^{(i,j)}=}
{}+6S^{(i,0)}S^{(1,0)}S^{(0,j)}
+3S^{(i,1)}S^{(0,0)}S^{(0,j)}-6S^{(i,0)}S^{(0,0)^{2}}S^{(0,j)}.
\end{gather*}
From the above general formulae, for $u$ and $w$ defined as in \eqref{uw-S}, one has
\begin{subequations}\label{u-xt}
\begin{gather}
u_{x}=2S^{(1,0)}-u^{2},\label{u-xt-x}
\\
u_{t}=4\big(2S^{(3,0)}-2uS^{(2,0)}+S^{(1,0)^{2}}\big),
\\
u_{xxx}=2S^{(3,0)}+6S^{(2,1)}-8uS^{(2,0)}-14S^{(1,0)^{2}}-6uS^{(1,1)} +24u^{2}S^{(1,0)}-6u^{4},
\end{gather}
\end{subequations}
and
\begin{subequations}\label{w-xt}
\begin{gather}
w_{x} = S^{(2,0)}+S^{(1,1)}-uw,\label{w-xt-x}\\
w_{t} = 4\big(S^{(4,0)}+S^{(1,3)}+wS^{(1,1)}-wS^{(2,0)}-uS^{(1,2)}\big), \\
w_{xx} = S^{(3,0)}+3S^{(1,2)}-2uS^{(2,0)}-uS^{(1,1)}-3w^{2}+2u^{2}w, \label{w-xt-xx} \\
w_{xxx}=S^{(4,0)}+4S^{(1,3)}+3S^{(2,2)}-3uS^{(3,0)}-13wS^{(2,0)}-4uS^{(2,1)}-8wS^{(1,1)}\nonumber
\\ \hphantom{w_{xxx}=}
{}+15uw^{2}+6u^{2}S^{(2,0)}+3u^{2}S^{(1,1)}-6u^{3}w.
\end{gather}
\end{subequations}
The function $u$ obeys the potential KdV equation (cf.~\cite{XZZ-2014})
\begin{gather}\label{pKdV}
u_{t}-6(u_{x})^{2}-u_{xxx}= 0,
\end{gather}
while $w$ satisfies
\begin{gather*}
w_{t}-6u_{x}w_{x}-w_{xxx}=3\big({-}S^{(2,2)}+S^{(4,0)}+uS^{(3,0)}+u^{2}S^{(1,1)}-wS^{(2,0)}-uw^{2}\big).
\end{gather*}
The right hand side vanishes by using the identity \eqref{S-ij-k=1} for $(i,j)=(2,0)$ and $(1,0)$.
Thus, we obtain equation~\eqref{uw-eq-cc-a}, i.e., \begin{gather} \label{w-eq-1}
w_{t}-6u_{x}w_{x}-w_{xxx}=0.
\end{gather}

Note that equation~\eqref{u-xt-x} indicates a Miura-type transformation
\begin{gather}\label{uw-MT}
2w=u_x+u^2,
\end{gather}
which connects the potential KdV equations \eqref{pKdV} and~\eqref{w-eq-1}.
For this we have the following.
\begin{Proposition}\label{prop-5}
In light of relation \eqref{uw-MT}, the two equations in the system \eqref{uw-eq-cc} are connec\-ted~as
 \begin{gather}\label{uw-eq-MT}
 w_{t}-6u_{x}w_{x}-w_{xxx} = \frac{1}{2}(\partial_{x}+2u) \big[u_{t}-6(u_{x})^{2}-u_{xxx}\big].
 \end{gather}
\end{Proposition}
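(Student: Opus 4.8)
The plan is to verify the operator identity \eqref{uw-eq-MT} directly by substituting the Miura-type transformation \eqref{uw-MT} into its right-hand side and reducing everything to $u$ and its $x$- and $t$-derivatives. Since $2w = u_x + u^2$, every occurrence of $w$ and its derivatives on the left-hand side of \eqref{uw-eq-MT} can be re-expressed purely in terms of $u$, so the claimed identity becomes an assertion about a single function $u$; it should then hold as a polynomial differential identity, valid for \emph{all} $u$, not merely for solutions of the potential KdV equation.

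First I would compute the derivatives of $w$ needed on the left-hand side. From $2w = u_x + u^2$ we get $2w_x = u_{xx} + 2u u_x$, and hence $2w_{xxx} = u_{4x} + 2(u u_x)_{xx} = u_{4x} + 2u u_{xxx} + 6 u_x u_{xx}$ after expanding $(u u_x)_{xx} = u\,u_{xxx} + 3 u_x u_{xx}$. Likewise $2 w_t = u_{xt} + 2 u u_t$. Substituting these into the combination $2(w_t - 6 u_x w_x - w_{xxx})$ gives
\begin{gather*}
u_{xt} + 2u u_t - 6 u_x(u_{xx} + 2u u_x) - \big(u_{4x} + 2u u_{xxx} + 6 u_x u_{xx}\big).
\end{gather*}
Separately, I would expand the right-hand side operator applied to $P := u_t - 6(u_x)^2 - u_{xxx}$, namely $(\partial_x + 2u)P = P_x + 2uP$, where $P_x = u_{xt} - 12 u_x u_{xx} - u_{4x}$. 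The goal is then to check that the substituted left-hand side equals $(\partial_x + 2u)P$ term by term.

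The verification is a matching of the two polynomials in $u$ and its derivatives. Collecting terms, the left-hand expression above regroups as $\big(u_{xt} - 12 u_x u_{xx} - u_{4x}\big) + 2u\big(u_t - u_{xxx}\big) - 12 u u_x^2$, and the last two pieces combine to exactly $2u\big(u_t - 6 u_x^2 - u_{xxx}\big) = 2uP$; together with the first bracket, which is $P_x$, this reproduces $P_x + 2uP = (\partial_x + 2u)P$, as required. So the identity holds identically in $u$. The only mild bookkeeping obstacle is the correct expansion of $(u u_x)_{xx}$ and the careful tracking of the nonlinear terms $-12 u_x u_{xx}$ and $-12 u u_x^2$, which must split cleanly between the $P_x$ part and the $2uP$ part; once those cancel, the identity is immediate, and in particular \eqref{w-eq-1} follows from \eqref{pKdV} through \eqref{uw-MT}, consistent with the derivations already given via \eqref{w-eq-1} and \eqref{S-ij-k=1}.
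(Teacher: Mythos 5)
Your proposal is correct: substituting $2w=u_x+u^2$ and its derivatives into $2\bigl(w_t-6u_xw_x-w_{xxx}\bigr)$ and regrouping as $P_x+2uP$ with $P=u_t-6(u_x)^2-u_{xxx}$ is exactly the direct verification the paper relies on (it states Proposition \ref{prop-5} without writing out the computation). All the expansions, in particular $(uu_x)_{xx}=uu_{xxx}+3u_xu_{xx}$ and the split of $-12u_xu_{xx}$ and $-12uu_x^2$ between $P_x$ and $2uP$, check out.
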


\subsubsection{Lax Pair }

A Lax pair for the coupled system \eqref{uw-eq-cc}
can be constructed as well from the Cauchy matrix approach (cf.~\cite{HJN-2016}).
Let us consider the relation \eqref{u-i-xt} with $i=0$ and $i=1$, i.e., \begin{subequations}\label{bu-x}
\begin{gather}
\bu_{x}^{(0)}= \bu^{(1)}-u\bu^{(0)},
\\
\bu_{x}^{(1)}=\bu^{(2)}-w\bu^{(0)},
\end{gather}
\end{subequations}
and
\begin{subequations}\label{bu-t}
\begin{gather}
\bu_{t}^{(0)}=4\big(\bu^{(3)}-S^{(0,0)}\bu^{(2)}-S^{(0,2)}\bu^{(0)}+S^{(0,1)}\bu^{(1)}\big),
\\
\bu_{t}^{(1)}=4\big(\bu^{(4)}-S^{(1,0)}\bu^{(2)}-S^{(1,2)}\bu^{(0)}+S^{(1,1)}\bu^{(1)}\big).
\end{gather}
\end{subequations}
Then we do the following.
Denote the first entries of $\bu^{(0)}$ and $\bu^{(1)}$ by $\phi_1$ and $\phi_2$, respectively.
Since in the complex field matrix $\bK$ is always similar to a lower triangular matrix, in light of~Proposition \ref{prop-4},
without loss of generality, we assume in the following $\bK$ is lower triangular and denote the entry $K_{1,1}$ in $\bK$
by $\lambda$. Make use of the recursion relation \eqref{u-i-rec}
to express $\bu^{(2)}$, $\bu^{(3)}$ and $\bu^{(4)}$ in terms of $\bu^{(0)}$ and $\bu^{(1)}$.
Then, for $\Phi=(\phi_1,\phi_2)^{\rm T}$, from \eqref{bu-x} and \eqref{bu-t}, and after the above treatment
we arrive at
\begin{gather*}
\Phi_{x}= L_{1} \Phi,\qquad \Phi_{t}= L_{2} \Phi,
\end{gather*}
where
\begin{gather*}
L_{1}= \begin{pmatrix}
 -u & 1 \\
 \lambda^{2}-2w & u
 \end{pmatrix}\!,
 \\[1ex]
 L_{2}= 4 \begin{pmatrix}
 - \lambda^{2}u-w_{x} & \lambda^{2}+2w-u^{2}
 \\
 \lambda^{4}-2 \lambda^{2}w-\frac{1}{2}w_{xx}-uw_{x}& \lambda^{2}u+w_{x}
 \end{pmatrix}\!,
 \end{gather*}
and we have made use of \eqref{w-xt-x} and \eqref{w-xt-xx} to express
$S^{(1,1)}$, $S^{(0,2)}$, $S^{(1,2)}$ and $S^{(0,3)}$ in terms of~$u$,~$w$ and their derivatives.

The compatibility $\Phi_{xt}=\Phi_{tx}$ gives rise to
 \begin{gather*}
 L_{1_{t}}- L_{2_{x}}+[ L_{1}, L_{2}]=0,
 \end{gather*}
i.e., in explicit form, gives the following equations
\begin{gather*}
 4\lambda^{2}\big(u^2+u_x-2w \big) -u_{t}-8w \big(u^{2}-2w\big)-4uw_{x}+2w_{xx}=0, 
\\
 u^{2}-2w+u_{x}= 0, 
\\
 w_{t}-6u_{x}w_{x}-w_{xxx} +4w_{x}\big(u^2+u_x-2w \big) =0. 
\end{gather*}
This leads to three equations
\begin{subequations}\label{eq3}
\begin{gather}
2w=u^2+u_x,\label{eq3-1}\\
u_{t}-6(u_{x})^{2}-u_{xxx}=0, \label{eq3-2}\\
w_{t}-6u_{x}w_{x}-w_{xxx}=0, \label{eq3-3}
\end{gather}
\end{subequations}
where \eqref{eq3-1} is nothing but the Miura transformation \eqref{u-xt-x}, i.e., \eqref{uw-MT},
equation~\eqref{eq3-3} is a~consequence of \eqref{eq3-2} and \eqref{eq3-1} in light of Proposition \ref{prop-5}.

\subsection{Connection to the KdV}\label{sec-5-3}

Equation \eqref{uw-eq-cc-a} is the continuous counterpart of equation~\eqref{uw-eq-a}.
To reveal more links between this equation and the KdV equation, let us recall some known results about the KdV equation.
The potential KdV equation \eqref{uw-eq-cc-b} has a Lax pair\footnote{Note that equation~\eqref{uw-eq-cc-b} is a potential version of the usual KdV equation
$v_t=6vv_x+v_{xxx}$ with $v=2u_x$.}
\begin{subequations}\label{kdv-lax}
\begin{gather}
\phi_{xx}+2u_x\phi=\lambda \phi, \label{kdv-lax-a}
\\
 \phi_{t}=\phi_{xxx}+3(\lambda+2u_x)\phi_x.
\label{kdv-lax-b}
\end{gather}
\end{subequations}
Through the transformation
\begin{gather*}
u=(\ln f)_x
\end{gather*}
the potential KdV equation \eqref{uw-eq-cc-b} is bilinearized as \cite{Hirota-PRL--1971}
\begin{gather}\label{KdV-bil}
\big(D_xD_t-D_x^4\big)f\cdot f=0,
\end{gather}
from which one can construct a bilinear B\"acklund transformation (BT) \cite{Hirota-1974}
\begin{subequations}\label{kdv-BT-bil}
\begin{gather}
D^2_xf\cdot f'=\lambda f f',
\label{kdv-BT-bil-a}\\
\big(D_t -D_x^3-3 \lambda D_x\big)f\cdot f' =0,
\label{kdv-BT-bil-b}
\end{gather}
\end{subequations}
and if $f$ is a solution to the bilinear KdV equation \eqref{KdV-bil}, so is $f'$.
Introducing
\begin{gather*}
\phi=\frac{f'}{f},
\end{gather*}
one can recover the Lax pair \eqref{kdv-lax} from the BT \eqref{kdv-BT-bil} \cite{Hirota-1974}.

Note that by the transformation
\begin{gather*}
w=\frac{f'}{f},\qquad u=(\ln f)_x,
\end{gather*}
equation~\eqref{uw-eq-cc-a} is bilinearized as
\begin{gather*}
\big(D_t-D_x^3\big)f\cdot f'=0,
\end{gather*}
which belongs to a deformed or modified bilinear BT of the KdV equation \cite{ChenBC-2004}:
\begin{subequations}\label{kdv-BT-bil-D2}
\begin{gather}
\big(D_x^2+ 2\gamma D_x \big)f \cdot f'= 0,\label{kdv-BT-bil-D2-a}
\\
\big(D_t- D_x^3 \big)f \cdot f'= 0,\label{kdv-BT-bil-D2-b}
\end{gather}
\end{subequations}
where $\gamma$ is a new B\"acklund parameter.
Introducing
\begin{gather*}
\psi=\frac{f'}{f},\qquad u=(\ln f)_x,
\end{gather*}
from the above deformed bilinear BT, one finds
\begin{subequations}\label{kdv-lax-D}
\begin{align}
&\psi_{xx}-2\gamma \psi_x+2u_x\psi=0, \label{kdv-lax-D-a}
\\
& \psi_{t}=\psi_{xxx}+6u_x \psi_x,
\label{kdv-lax-D-b}
\end{align}
\end{subequations}
which is another Lax pair of the potential KdV equation \eqref{uw-eq-cc-b}.
The two Lax pairs, \eqref{kdv-lax} and~\eqref{kdv-lax-D}, are connected by the gauge transformation
\begin{gather*}
\phi={\rm e}^{-(\gamma x+4 \gamma^3 t)}\psi,\qquad \gamma^2=\lambda.
\end{gather*}

Now we get the links between equation~\eqref{uw-eq-cc-a} and the potential KdV equation \eqref{uw-eq-cc-b}:
the time-part \eqref{kdv-lax-D-b} of the Lax pair \eqref{kdv-lax-D} is nothing but the equation \eqref{uw-eq-cc-a},
which means $w$ is in some sense the eigenfunction of the (potential) KdV equation;
if, based on the idea in \cite{LeviB-PNAS-1980}, we~consider the transformation $f\to f'$
as a shift in a discrete independent variable, e.g., $f'=\t f$,
the transformation \eqref{kdv-BT-bil-D2-a} is converted to the bilinear equation \eqref{BT-1-bil-3-a}
with $p=\gamma$.

\section{Concluding remarks}\label{sec-6}

We have revisited H2 equation \eqref{H2} with the help of the Cauchy matrix approach and H1 equation.
The function $y$ in H2 can be expressed as~\eqref{yzuw}, where
$u=S^{(0,0)}$ and $w=S^{(1,0)}$ are variables defined in the Cauchy matrix approach
and $u$, $w$ satisfy the coupled system~\eqref{BT-1}.
This coupled system provides a BT between the $u$-equation, H1 \eqref{lpKdV}, and the 9-point $w$-equation~\eqref{w-eq}.
They compose a consistent triplet.
To investigate integrability of the 9-point equation~\eqref{w-eq} of~$w$,
we derived from~\eqref{BT-1} a coupled quad-system~\eqref{uw-eq}, or equivalently, equation~\eqref{uw-eqeq},
which is CAC and allows a Lax pair~\eqref{Lax-pair}.
This fact indicates the integrability of the 9-point equation~\eqref{w-eq}.
The coupled system \eqref{uw-eq} and the $\mathrm{H}1\times \mathrm{H}2$ system \eqref{H1H2}
derived in \cite{KNPT-2020} are connected through~\eqref{yzuw} and~\eqref{BT-2}.

H2 is a consequence of the coupled system \eqref{BT-1}.
Since in a consistent triplet the BT determines the other two equations in the triplet,
in light of Propositions~\ref{prop-333} and~\ref{prop-33}, one may consider the 9-point equation \eqref{w-eq}
is an alternative of H2 after removing the background $z^2_0$ and the term $z_0 u$ from $y$.
Again, due to the consistent triplets, the bilinear form \eqref{BT-1-bil} can serve as a bilinear form
for H2 \eqref{H2}, the coupled quad-system \eqref{uw-eq} and the 9-point equation \eqref{w-eq}.

As for the continuous counterparts, in continuum limits the coupled quad-system \eqref{uw-eqeq}
gives rise to the semi-discrete system \eqref{uw-eq-sd}, which implies equation~\eqref{w-eq-sd} for only $w$;
and the full continuum limit leads \eqref{uw-eq-sd} to \eqref{uw-eq-cc}.
The two equations in \eqref{uw-eq-cc} are connected via the Miura transformation \eqref{uw-MT} by \eqref{uw-eq-MT}.
Eliminating $u$ from \eqref{uw-eq-cc} yields equation~\eqref{w-eq-cc} for $w$ only.
In~addition to the Mirua-type connection with the potential KdV equation, equation~\eqref{uw-eq-cc-a} appears as the
$t$-part in the gauged Lax pair \eqref{kdv-lax-D} of the potential KdV equation \eqref{uw-eq-cc-b},
which indicates that $w$ can be considered as an eigenfunction (in the Lax pair) of the (potential) KdV equation.
Note that we derived the gauged Lax pair from a deformed bilinear BT \eqref{kdv-BT-bil-D2},
in~which~\eqref{kdv-BT-bil-D2-a} and the bilinear equation \eqref{BT-1-bil-3-a} share the same form
if we consider~$f'$ as a~shift~$\t f$.
We also note that higher order equations in the continuum limit
of the coupled quad-system~\eqref{uw-eqeq} will be investigated in Appendix~\ref{app}.

\appendix

\section{Higher order equations in the continuum limits}\label{app}

In \cite{WC-1987} Wiersma and Capel established a scheme to achieve higher order equations in continuum
limits, and they obtained the (potential) KdV hierarchy from the continuum limit of the lpKdV equation \eqref{lpKdV}.
In their scheme, discrete equations are handled in the (skew) coordinates
\[
(N=n+m, m),
\]
and in the first round, take
\begin{gather}\label{skew-1}
\delta=q-p\to 0,\qquad m\to \infty,\qquad \tau_1=\delta m \quad(\mathrm{being~finite}).
\end{gather}
For the plane wave factor $\rho$ given in Section~\ref{sec-5-1}, it turns out that
\begin{align*}
\rho &=\biggl(\frac{p-k}{p+k}\biggr)^{n}\biggl(\frac{q-k}{q+k}\biggr)^{m}
= \biggl(\frac{p-k}{p+k}\biggr)^{N}\biggl(1+\frac{2\delta k}{(p+\delta +k)(p-k)}\biggr)^{m}
 \\
 & \longrightarrow \biggl(\frac{p-k}{p+k}\biggr)^{N}\exp\biggl(\frac{2k\tau_1}{p^2-k^2}\biggr).
\end{align*}
The resulting semi-discrete equation usually has an evolution form
\begin{gather}\label{U-eq}
\partial_{\tau_1}V_N=f\big(V_N, {\rm e}^{j\partial_N}V_N\big),
\end{gather}
where we adopt the notation ${\rm e}^{j\partial_N}V_N=V_{N+j}$.
In the full continuum limit, infinitely many independent variables
$(t_1=x, t_3, t_5,\dots)$ are introduced through
\[
\rho=\biggl(\frac{p-k}{p+k}\biggr)^{N}\exp\biggl(\frac{2k\tau_1}{p^2-k^2}\biggr)
=\exp\Biggl({\sum^{\infty}_{l=0}k^{2l+1}t_{2l+1}}\Biggr),
\]
where
\begin{gather*}
t_{2l+1}=\frac{2}{p^{2l+1}}\bigg(\frac{-N}{2l+1}+\frac{\tau_1}{p}\bigg).
\end{gather*}
It then follows that
\begin{subequations}\label{t-pN}
\begin{gather}
\partial_N=\sum^{\infty}_{l=1}\frac{-2}{(2l+1)p^{2l+1}}\partial_{t_{2l+1}}
\end{gather}
and
\begin{gather}
\partial_{\tau_1}=\sum^{\infty}_{l=1}\frac{2}{p^{2l+2}}\partial_{t_{2l+1}}=\partial_p\partial_N.
\end{gather}
\end{subequations}
Replacing $\partial_{\tau_1}$ and $\partial_N$ in the equation \eqref{U-eq} with the above
operators and expanding the equation in terms of $1/p$,
a hierarchy of equations will be obtained from the coefficients of $1/p^j$ in the expansion.

With this scheme, the lpKdV equation \eqref{uw-eqeq-b} is rewritten in terms of $(N,m)$ as
\begin{gather*}
\big(p+q+V-\th{\t V}\big)\big(p-q+\th V-\t V\big)=p^2-q^2, 
\end{gather*}
where $V=V(N,m)=u(n,m)$.
In light of \eqref{skew-1}, it gives rise to a semi-discrete pKdV equation
\begin{gather}\label{sdkdv}
\partial_{\tau_1}V_N=1-\frac{2p}{2p+{\rm e}^{-\partial_N}_{}V_N-{\rm e}^{\partial_N}_{}V_N}.
\end{gather}
In the full limit, redefining $V=V(t_1=x, t_3, t_5,\dots)$ and using \eqref{t-pN}, the pKdV hierarchy is obtained by expanding \eqref{sdkdv}
in terms of $1/p$. The first two equations from the expansion are
\begin{gather}\label{KdV-U}
V_{t_3}+3(V_x)^2-V_{xxx}=0
\end{gather}
and
\begin{gather}\label{KdV5-U}
V_{t_5}-10(V_x)^3+5(V_{xx})^2+10 V_xV_{xxx}-V_{xxxxx}=0,
\end{gather}
which are the 3rd-order potential KdV (pKdV) equation and 5th-order pKdV equation.
For more details one may refer to \cite{WC-1987}.

For the equation \eqref{uw-eqeq-a}, in the coordinates $(N,m)$ it is rewritten as
\begin{gather*}
 \big(p-q+\th V-\t V\big)\big(W-\th{\t W}\big)-(p+q)\big(\t{W}-\th{W}\big)=0,
\end{gather*}
where $W=W(N,m)=w(n,m)$.
The semi-discrete equation obtained in light of \eqref{skew-1} is
\begin{gather*}
\partial_{\tau_1}W_N=\frac{{\rm e}^{-\partial_N}_{}W_N-{\rm e}^{\partial_N}_{}W_N}
{2p+{\rm e}^{-\partial_N}_{}V_N-{\rm e}^{\partial_N}_{}V_N},
\end{gather*}
where we have replaced $\partial_{\tau_1}V_N$ in the equation using \eqref{sdkdv}.
In the full limit of the above equation, the first two equations from the leading terms are
\begin{gather}\label{KdV-W}
W_{t_3}=W_{xxx}-3V_x W_x
\end{gather}
and
\begin{gather}\label{KdV5-W}
W_{t_5}=W_{xxxxx}-5V_xW_{xxx}-5V_{xx}W_{xx}-5V_{xxx}W_{x}+ 10(V_x)^2 W_x,
\end{gather}
where $W=W(t_1=x, t_3, t_5,\dots)$.

Noting that \eqref{KdV-U} and \eqref{eq3-2} are connected by $V=-2u$,
we introduce a gauge-transformed Schr\"odinger spectral problem (compared with \eqref{kdv-lax-D-a})
\begin{gather}
W_{xx}-2\gamma W_x- V_x W=0, \label{kdv-lax-U}
\end{gather}
where $W$ serves as the eigenfunction and $\gamma$ for the spectral parameter.
Then, it turns out that the pKdV equation \eqref{KdV-U} is a result of the compatibility
of \eqref{kdv-lax-U} and \eqref{KdV-W}, i.e., $(W_{xx})_{t_3}=(W_{t_3})_{xx}$,
and the compatibility of \eqref{kdv-lax-U} and \eqref{KdV5-W} yields the 5th-order pKdV equation \eqref{KdV5-U}.
The same correspondence holds for the 7th-order equations obtained from the full continuum limits.

\subsection*{Acknowledgments}

The authors are grateful to the referees for their invaluable comments.
This project is supported by the NSF of China (Nos.~11631007 and 11875040)
and Science and technology innovation plan of Shanghai (No.~20590742900).

\pdfbookmark[1]{References}{ref}
\LastPageEnding

\end{document}